\makeatletter
\def\input@path{{template/}}
\makeatother

\documentclass[conference]{IEEEtran}

\newif\ifpublish
\publishtrue

\usepackage{silence}
\usepackage{lmodern}
\usepackage[T1]{fontenc}
\usepackage{microtype}

\usepackage{amsmath}
\usepackage{amssymb}
\usepackage{amsthm}

\usepackage{graphicx}
\graphicspath{{data/plots/}{figures/}}
\usepackage{subcaption}
\usepackage{booktabs}

\usepackage{enumitem}
\setlist[itemize]{leftmargin=*,labelindent=0pt,itemsep=2pt,topsep=2pt}
\setlist[enumerate]{leftmargin=*,labelindent=0pt,itemsep=2pt,topsep=2pt}
\makeatletter
\renewcommand\paragraph{\@startsection{paragraph}{4}{\z@}
    {0.6\baselineskip \@plus 2\p@}{-0.5em}{\normalfont\normalsize\bfseries}}
\def\@IEEEsectpunct{\ }
\makeatother
\usepackage{xspace}
\usepackage{etoolbox}
\usepackage{xcolor}

\usepackage{algorithm}
\usepackage[noend]{algpseudocode}
\usepackage{float}

\usepackage{hyperref}
\hypersetup{
    colorlinks=true,
    linkcolor={magenta!80!black},
    citecolor={green!55!black},
    urlcolor={black!55},
    pdfborder={0 0 0},
}
\usepackage{cleveref}

\crefname{section}{Section}{Sections}
\Crefname{section}{Section}{Sections}
\crefname{subsection}{Section}{Sections}
\Crefname{subsection}{Section}{Sections}
\crefname{appendix}{Appendix}{Appendices}
\Crefname{appendix}{Appendix}{Appendices}
\crefname{algorithm}{Algorithm}{Algorithms}
\Crefname{algorithm}{Algorithm}{Algorithms}
\crefname{theorem}{Theorem}{Theorems}
\Crefname{theorem}{Theorem}{Theorems}
\crefname{lemma}{Lemma}{Lemmas}
\Crefname{lemma}{Lemma}{Lemmas}
\crefname{table}{Table}{Tables}
\Crefname{table}{Table}{Tables}
\crefname{equation}{Equation}{Equations}
\Crefname{equation}{Equation}{Equations}
\crefname{ALG@line}{line}{lines}
\Crefname{ALG@line}{Line}{Lines}

\newtheorem{theorem}{Theorem}
\newtheorem{lemma}[theorem]{Lemma}

\newcommand{\sysname}{\textsf{Orcaella}\xspace}
\newcommand{\orcdag}{\textsf{OrcDAG}\xspace}

\newcommand{\voteqc}{\textsf{VoteQC}\xspace}
\newcommand{\voteqcs}{\textsf{VoteQCs}\xspace}
\newcommand{\checkpointqc}{\textsf{CheckpointQC}\xspace}
\newcommand{\checkpointqcs}{\textsf{CheckpointQCs}\xspace}
\newcommand{\finalityqc}{\textsf{FinalityQC}\xspace}
\newcommand{\finalityqcs}{\textsf{FinalityQCs}\xspace}

\newcounter{claim}
\renewcommand{\theclaim}{\textbf{C\arabic{claim}}}
\newcommand{\claimitem}[1]{\item\refstepcounter{claim}\label{#1}\theclaim:}

\newcommand{\consensuscode}{
    \ifpublish
        \url{https://github.com/asonnino/mysticeti} (commit \texttt{96dee8d})
    \else
        \url{https://anonymous.4open.science/r/uncertified-dag-fabric}
    \fi
}
\newcommand{\validatorcode}{
    \ifpublish
        \url{https://github.com/asonnino/blockchain-validator} (commit \texttt{a7cc550
        })
    \else
        \url{https://anonymous.4open.science/r/blockchain-validator}
    \fi
}

\newcommand{\algsize}{\fontsize{7.5}{8.5}\selectfont}

\AtBeginEnvironment{algorithm}{\algsize}

\floatstyle{boxed}
\restylefloat{algorithm}
\restylefloat*{algorithm}

\begin{document}

\title{\sysname: Hybrid Fault Tolerance with Client-Selectable Finality Latency}

\ifpublish
    \author{
        \IEEEauthorblockN{Lefteris Kokoris Kogias, Alberto Sonnino}
        \IEEEauthorblockA{Mysten Labs\\ \texttt{\{lefteris, alberto\}@mystenlabs.com}}
    }
\else
    \author{}
    \pagestyle{plain}
    \thispagestyle{plain}
\fi
\maketitle

\begin{abstract}
    Classical partially synchronous state machine replication, as in
    PBFT~\cite{Castro1999PBFT}, tolerates $f$ Byzantine replicas among
    $n \geq 3f+1$ using three communication steps per request. Recent
    protocols such as
    Minimmit~\cite{Minimmit} achieve \emph{two-message-delay} decisions
    under stronger size assumptions, notably $n \geq 5f+1$ when any silent
    replica must be counted as a potential equivocator.
    Hydrangea~\cite{Hydrangea} and Kudzu~\cite{Kudzu} treat mixed Byzantine and crash faults, focusing on providing a fast-path under optimistic conditions while maintaining a fall-back commitment path similar to PBFT. In this paper, we also consider a mixed model, but focus on studying the fault tolerance of the 2-message-delay commit. For this, we prove a tight bound of
    $n \geq 5f+3c+1$. Extending this result, we also show that there exists a more resilient commit path that allows an extra $f_{abc} < n-3f-2c$ alive-but-corrupt~\cite{Malkhi2019FlexibleBFT} faults at 4-message-delays. Core liveness is claimed in executions with at most $f$ equivocators; if this regime is violated (e.g., AbC-induced forks), the protocol enters synchronous recovery, where only the resilient-path safety guarantee is preserved.
    As a result, for $f=16$, $c=6$, and $n=99$, we obtain a commit path that tolerates $22\%$ of replicas failing for liveness, $16\%$ equivocating for 1-RTT safety, and $54\%$ equivocating for 2-RTT safety, where one RTT equals two message delays for clients collocated with replicas.
\end{abstract}

\section{Introduction}
\label{sec:intro}

State machine replication (SMR) protocols are the foundational building blocks of modern decentralized systems~\cite{sok-consensus}. Under partial synchrony~\cite{Dwork1988PartialSync}, classical solutions like PBFT~\cite{Castro1999PBFT} tolerate up to $f$ Byzantine replicas among $n \geq 3f+1$, but require three all-to-all communication delays to reach consensus. As decentralized applications demand ever-lower latency, recent research has introduced \emph{two-message-delay} protocols (e.g., Minimmit~\cite{Minimmit}). However, achieving this optimal fast-path latency traditionally requires a much larger committee size of $n \geq 5f+1$, as every silent replica must be conservatively treated as a potential Byzantine equivocator.

\paragraph*{The challenge.}
To mitigate the severe $5f+1$ requirement, a natural approach is to distinguish between fully Byzantine faults (who may equivocate) and crash faults (who merely go silent). Recent work like Hydrangea~\cite{Hydrangea} and Kudzu~\cite{Kudzu} adopt this mixed fault model. However, existing hybrid systems focus on providing a fast-path under optimistic conditions, while eagerly falling back to a slower, PBFT-style 3-round path when faults increase.

This leaves an open challenge: \emph{How can we maximize the fault tolerance of the optimal 2-message-delay commit itself when relaxing the fault model?} Furthermore, if a client does not care about this lower latency, \emph{can we enhance the mixed fault model to provide even stronger, FlexibleBFT-style~\cite{Malkhi2019FlexibleBFT} safety guarantees?}

\paragraph*{Our solution.}
In this paper, we answer these questions by formally defining the tight quorum intersections required to maximize 2-message-delay commits under separate Byzantine ($f$) and crash-faulty ($c$) caps. We show that such a system necessitates:
\[
  n \;\geq\; 5f + 3c + 1
\]
with an optimal fast-path quorum of $q = n-f-c$, i.e., $4f+2c+1$ at the minimal committee size. At $c=0$, this recovers the familiar $5f+1$ regime~\cite{FaBPaxos,Minimmit}, but when crashes are separated, it allows for configurations with better liveness guarantees.

Building on this foundational result, we introduce \sysname, a hybrid protocol that exposes two explicit finality paths for clients. These paths are built using \emph{Quorum Certificates} (QCs)---cryptographic proofs consisting of $q$ matching signed messages from distinct replicas:
\begin{itemize}
  \item \textbf{Optimal Fast-Path (2-delay):} Clients can finalize quickly after collecting a single quorum of votes (\voteqc), providing safety against $f$ Byzantine faults.
  \item \textbf{Resilient Path (4-delay):} Clients willing to wait for two additional rounds (chaining a \checkpointqc and \finalityqc) gain resilience against an additional $f_{abc}$ \emph{alive-but-corrupt}~\cite{Malkhi2019FlexibleBFT} replicas. These are replicas that participate but may equivocate.
\end{itemize}

Crucially, the resilient path provides safety when the core execution assumptions are violated and forks occur. In executions with at most $f$ equivocators, the protocol retains its normal liveness behavior. If this regime is violated (e.g., AbC-induced forks), liveness is temporarily lost, replicas enter synchronous recovery, and resilient-path safety remains preserved throughout.

\paragraph*{Concrete trade-offs.}
This dual-path architecture allows operators to explicitly navigate the latency-safety-liveness trade-off. Depending on the expected environment, for a deployment of $n \approx 100$ replicas, an operator can tune their configuration (\Cref{fig:fault-coverage}):

\begin{figure}[t]
  \centering
  \includegraphics[width=\linewidth]{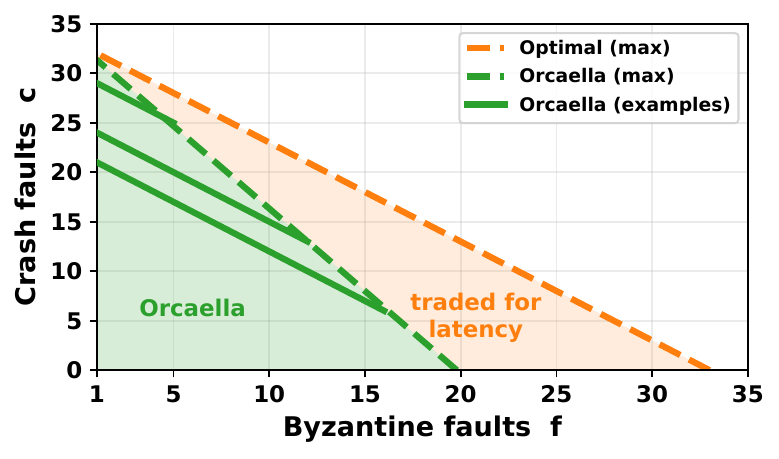}
  \caption{
    Fault-tolerance design space at $n=100$. Typical $3f+1$ protocols treat every fault as Byzantine, covering the region $f+c \le 33$ (orange) at the cost of requiring 3 message delays to commit. \sysname instead covers the region $5f+3c+1 \le n$ (green), trading some fault tolerance for its 2-message-delay commit. Each solid green line is one example config's runtime frontier: a cap $(f,c)$ tolerates any mix down to $(0,f+c)$.
  }
  \label{fig:fault-coverage}
\end{figure}

\begin{itemize}
  \item \textbf{Balanced ($f=12, c=13$):} Tolerates up to 25 offline replicas for liveness, ensures 1-RTT (two-message-delay) safety against 12 Byzantine faults, and 2-RTT safety against a maximum of 49 total equivocators (strictly below the intersection bound $T = 50$).
  \item \textbf{Byzantine-heavy ($f=16, c=6$):} Optimizing for more malicious conditions slightly reduces liveness tolerance (22 offline replicas) but boosts 1-RTT safety to 16 and 2-RTT safety to a maximum of 54 total equivocators ($< T = 55$).
  \item \textbf{Crash-heavy ($f=5, c=25$):} Assuming a more benign but flaky network maximizes liveness (tolerating 30 offline replicas) while still offering 2-RTT safety against a maximum of 40 total equivocators ($< T = 41$).
\end{itemize}
This flexibility allows deployments to adapt to their exact threat models without sacrificing the optimal 2-message-delay commit. Although for the scope of this paper and all our results we consider $f>0$ we note that \sysname can actually be configured with $f=0$ and $n=2c+1$ to provide a CFT variant without any code changes other than configuring the quorums. \Cref{sec:crash-only} shows an evaluation of this variant for completeness.

\paragraph*{Contributions.}
We make the following contributions:
\begin{enumerate}
  \item The tight quorum inequalities ($n \geq 5f+3c+1$) that maximize the fault tolerance of 2-delay BFT consensus (\cref{thm:main}).
  \item \sysname, a protocol that exposes the 2-delay and 4-delay paths, including a fully specified view change and fork recovery mechanism.
  \item FlexibleBFT-style client analyses formalizing the safety vs.\ liveness trade-offs and proving the 4-delay path remains safe against an additional $f_{abc} < n - 3f - 2c$ alive-but-corrupt faults.
  \item \orcdag, a DAG-based instantiation of \sysname over an uncertified-DAG fabric, and its evaluation on realistic deployment quantifying the latency gained by trading fault tolerance.
\end{enumerate}

\section{Network and Fault Model}
\label{sec:model}

We use standard message-passing state machine replication~\cite{Schneider1990SMR} among $n$
\emph{replicas} (the \emph{leader} or \emph{primary} proposes each
slot) under partial synchrony~\cite{Dwork1988PartialSync}.
Channels are authenticated; digital signatures identify senders. We distinguish four replica \emph{behaviors}:
\begin{itemize}
  \item \emph{Byzantine} replicas: may crash or send arbitrary messages including \emph{equivocations} (inconsistent signed statements to different recipients) on any protocol message;
  \item \emph{Crash-faulty} replicas: follow the protocol while active and may then permanently stop sending messages without equivocating;
  \item \emph{Alive-but-Corrupt} (AbC) replicas~\cite{Malkhi2019FlexibleBFT}: remain live and participate in the protocol but may equivocate on any message they send;
  \item \emph{Correct} replicas: follow the protocol and remain live throughout.
\end{itemize}
Rather than fixing a single static fault partition, \sysname states each guarantee over one of two \emph{nested} families of executions:
\begin{itemize}
  \item \emph{Core executions}, in which at most $f$ replicas equivocate (Byzantine) and at most $c$ crash, leaving at least $n-f-c$ correct replicas. Every core execution enjoys Fast-Path safety \emph{and} liveness; all liveness analysis in this paper assumes this family, i.e., at most $f$ equivocators.
  \item \emph{Extended executions}, in which at most $f+f_{abc}$ replicas equivocate ($f$ Byzantine plus $f_{abc} < n-3f-2c$ AbC) and at most $c$ crash, leaving at least $n-f-c-f_{abc}$ correct replicas. In this strictly larger family only Resilient-Path safety is guaranteed; liveness may be temporarily lost until synchronous recovery (\cref{alg:fork_recovery}).
\end{itemize}
The same protocol runs unmodified in both families; each client chooses which finality rule to rely on---\voteqc (Fast Path) or \finalityqc (Resilient Path)---and thereby which family's guarantee it obtains.

The standard partial synchrony network model~\cite{Dwork1988PartialSync} assumes
that after an unknown Global Stabilization Time (GST), message delays between
correct replicas are bounded by a known constant~$\Delta$. The adversary controls all non-honest replicas up to the misbehavior they are allowed to perform and the network
scheduling (subject to this bound after GST). It cannot break cryptography.

\begin{table}[t]
  \caption{Notation}
  \label{tab:notation}
  \centering\footnotesize
  \begin{tabular}{@{}cl@{}}
    \toprule
    Symbol    & Meaning                                                                 \\ \midrule
    $n$       & Total replicas                                                          \\
    $f$       & Byzantine (equivocating) cap                                            \\
    $c$       & Crash-faulty cap                                                        \\
    $f_{abc}$ & Alive-but-Corrupt cap (Resilient Path client analysis)                  \\
    $q$       & Quorum threshold, $q = n-f-c$                                           
    \\
    $k$       & View-Change accept threshold, $k = 2f+c+1$                              \\
    $\sigma$  & State root after applying a slot ($\sigma_s = H(\sigma_{s-1} \Vert h)$) \\
    $\Delta$  & Post-GST bound on message delay                                         \\
    \bottomrule
  \end{tabular}
\end{table}

\section{The \sysname Protocol}
\label{sec:protocol}

This section instantiates the fault model and Fast-Path/Resilient Path split from \Cref{sec:model} into a message-level protocol.
It uses the thresholds derived in \cref{thm:main}: $n \geq 5f+3c+1$ and $q = n-f-c$ (equal to $4f+2c+1$ at the minimal $n$).
Replicas communicate on authenticated channels under partial synchrony. The protocol proceeds in views $v \in \mathbb{N}$, each with a designated leader $\textsc{Leader}(v)$, and assigns proposals to sequence numbers $s \in \mathbb{N}$.

\subsection{The protocol}

\cref{alg:protocol} specifies the protocol, covering both normal operation and leader recovery. Normal operation is divided into a Fast-Path (Steps~0--1, \crefrange{step:0}{step:1}) that guarantees safety against $f$ Byzantine faults in two message delays, and a Resilient Path (Steps~2--4, \crefrange{step:2}{step:4}) that provides extended safety against alive-but-corrupt (AbC) faults at the cost of two additional message delays.

\begin{algorithm*}[t]
  \caption{\sysname: Normal Operation \& Leader Recovery for replica $r$}
  \label{alg:protocol}
  \algsize
  \begin{algorithmic}[1]
    \State \textbf{Definitions \& State Variables:}
    \Statex \indent \textbf{Thresholds}: $q = n-f-c$ and $k = 2f+c+1$.
    \Statex \indent \voteqc: $q$ matching \textsc{Vote} messages for $(v,s,h)$ with \textsf{accept}.
    \Statex \indent \checkpointqc: $q$ matching \textsc{ChkProp} messages for $(s,h,\sigma)$.
    \Statex \indent \finalityqc: $q$ matching \textsc{ChkWitness} messages for $(s,h,\sigma)$.
    \Statex \indent $\textsf{current\_view} \gets 0$
    \Statex \indent $\mathcal{O} \gets \emptyset$ \Comment{Branch adopted from the latest \textsc{NewView} ($\bot$ = unconstrained)}
    \Statex
    \State \label{step:0} \textbf{(step 0)} \textbf{upon} $r = \textsc{Leader}(v)$ \textbf{and} $v = \textsf{current\_view}$ \textbf{and} new payload $B_{v,s}$ \textbf{do}
    \State \indent \textsf{broadcast} $\langle \textsc{Propose}, v, s, B_{v,s} \rangle_r$
    \Statex
    \State \label{step:1} \textbf{(step 1)} \textbf{upon} receiving $\langle \textsc{Propose}, v, s, B \rangle_L$ from $L = \textsc{Leader}(v)$ \textbf{do}
    \State \indent \textbf{if} $v = \textsf{current\_view}$ \textbf{and} $r$ has not yet voted for slot $s$ in view $v$ \textbf{then}
    \State \indent\indent \textbf{if} $\mathcal{O}[s] \in \{\bot, H(B)\}$ \textbf{then} \Comment{Vote only consistent with the adopted branch}
    \State \indent\indent\indent \textsf{broadcast} $\langle \textsc{Vote}, v, s, H(B), \textsf{accept} \rangle_r$
    \Statex
    \State \label{step:2} \textbf{(step 2)} \textbf{upon} observing a valid \voteqc $Q$ for $(v,s,h)$ \textbf{do}
    \State \indent \textbf{if} no \textsc{ChkProp} has been broadcast for slot $s$ \textbf{then}
    \State \indent\indent $\sigma_{s} \gets \textsc{Apply}(h)$ \Comment{Deterministic state transition}
    \State \indent\indent \textsf{broadcast} $\langle \textsc{ChkProp}, s, h, \sigma_{s}, Q \rangle_r$ \Comment{$Q$ justifies the proposal; excluded from quorum matching}
    \Statex
    \State \label{step:3} \textbf{(step 3)} \textbf{upon} observing a \checkpointqc $C$ for $(s,h,\sigma)$ \textbf{where} $\sigma = \textsc{Apply}(h)$ \textbf{do}
    \State \indent \textsf{broadcast} $\langle \textsc{ChkWitness}, s, h, \sigma \rangle_r$
    \Statex
    \State \label{step:4} \textbf{(step 4)} \textbf{upon} collecting $q$ matching $\langle \textsc{ChkWitness}, s, h, \sigma \rangle$ messages \textbf{do}
    \State \indent \textsf{output} $\langle \textsc{\finalityqc}, s, h, \sigma \rangle$ \Comment{Resilient Path finality decision for slot $s$}
    \Statex
    \State \textbf{upon} timeout in view $v$ \textbf{or} timeout awaiting $\langle \textsc{NewView}, v \rangle$ \textbf{do} \Comment{Timeouts double per attempt}
    \State \indent $\textsf{current\_view} \gets \bot$ \Comment{Halt normal operation}
    \State \indent $\mathcal{P}^r \gets \{ \text{highest-view } \langle \textsc{Vote}, v', s, h, \textsf{accept} \rangle \text{ cast by } r \text{ for each } s \}$
    \State \indent \textsf{broadcast} $\langle \textsc{ViewChange}, v+1, \mathcal{P}^r \rangle_r$
    \Statex
    \State \textbf{upon} $r = \textsc{Leader}(v)$ and receiving $q$ valid $\langle \textsc{ViewChange}, v, \mathcal{P}_i \rangle$ messages $\mathcal{V}$ \textbf{do}
    \State \indent \textbf{for each} slot $s$ represented in $\mathcal{V}$ \textbf{do}
    \State \indent\indent $\mathcal{K} \gets \{ h \mid h \text{ is the highest-view vote for } s \text{ in at least } k \text{ reports in } \mathcal{V} \}$ \Comment{Aggregated per digest, across views}
    \State \indent\indent \textbf{if} $\mathcal{K} \neq \emptyset$ \textbf{then}
    \State \indent\indent\indent $\mathcal{O}[s] \gets$ digest in $\mathcal{K}$ with the highest reported view \Comment{Singleton if slot $s$ has a \voteqc (\cref{lem:view-change-invariance}); ties broken by smallest digest}
    \State \indent\indent \textbf{else}
    \State \indent\indent\indent $\mathcal{O}[s] \gets \bot$ \Comment{No locked value; safe to propose arbitrary valid block}
    \State \indent \textsf{broadcast} $\langle \textsc{NewView}, v, \mathcal{V}, \mathcal{O} \rangle_r$
    \Statex
    \State \textbf{upon} receiving valid $\langle \textsc{NewView}, v, \mathcal{V}, \mathcal{O} \rangle_L$ from $L = \textsc{Leader}(v)$ \textbf{do}
    \State \indent verify $\mathcal{O}$ against $\mathcal{V}$ \Comment{Deterministically recompute $\mathcal{K}$ and $\mathcal{O}$}
    \State \indent $\textsf{current\_view} \gets v$ \Comment{Enter new view}
    \State \indent adopt branch $\mathcal{O}$ and resume normal operation
  \end{algorithmic}
\end{algorithm*}

\paragraph*{Client finalization paths.}
Clients finalize transactions by observing the certificates produced by the protocol. The latency bounds stated in this paper (e.g., two or four message delays) assume that clients are collocated on the replicas participating in the consensus protocol. If clients are external, their perceived finality latency naturally increases by the round-trip time (RTT) to communicate with the replica set, as is standard in all BFT protocols.
\begin{itemize}
  \item \textbf{Fast-Path (2-delay):} A client finalizes $(v,s,h)$ immediately upon collecting a \voteqc. This provides optimal latency and safety against $f$ Byzantine faults.
  \item \textbf{Resilient Path (4-delay):} A client finalizes $(s,h,\sigma)$ upon obtaining a \finalityqc (output by a replica in Step~4, \cref{step:4}) together with its underlying \checkpointqc. This trades higher latency for extended safety against an additional $f_{abc}$ alive-but-corrupt faults, without interfering with the core protocol's liveness~\cite{Malkhi2019FlexibleBFT}.
\end{itemize}

\paragraph*{Protocol description.}
The leader of view $v$ proposes a batch of transactions. Replicas validate the proposal and cast an \textsf{accept} vote, but only if the proposal is consistent with the branch $\mathcal{O}$ adopted from the latest \textsc{NewView} (\cref{step:1}); this guard is what carries Fast-Path commits across view changes (\cref{lem:view-change-invariance}). Correct replicas send at most one vote per slot $(v,s)$. If a client gathers $q$ such \textsf{accept} votes, it forms a \voteqc and can immediately safely execute the payload (Fast-Path). If the leader is faulty~\cite{be-aware-leaders} or the network is asynchronous, replicas eventually timeout and broadcast \textsc{ViewChange} messages, which collectively serve as an implicit abandon for uncommitted slots.

To support the Resilient Path, replicas that observe a \voteqc compute the resulting deterministic state root $\sigma$ and broadcast a checkpoint proposal. Each \textsc{ChkProp} carries its justifying \voteqc, so conflicting checkpoint proposals surface conflicting \voteqcs as fork evidence. We assume state roots bind the execution history: $\sigma_s$ commits to the previous root and the applied digest (e.g., $\sigma_s = H(\sigma_{s-1} \,\Vert\, h)$), so a root at height $s$ determines the checkpointed content at every height below it. \textbf{Safety is strictly guarded by Step~2} (\cref{step:2}): to prevent conflicting checkpoints, an honest replica broadcasts at most one \textsc{ChkProp} per slot height. For this reason, checkpoint artifacts deliberately carry no view number: a slot's \voteqc may form in view $v$ at one replica and in a later view $v'$ at another (with the same digest, by Fast-Path safety), and view-tagged \textsc{ChkProp}s would split each replica's single permitted proposal across views, permanently blocking the checkpoint. While this strict locking rule means \emph{liveness} can be lost if honest replicas split their \textsc{ChkProp} messages across different branches (a trade-off we explicitly accept), it provides an ironclad safety guarantee. Once $q$ replicas agree on this checkpoint (forming a \checkpointqc), they lock it by broadcasting a \textsc{ChkWitness} message. Note that \textsc{ChkWitness} messages attest to the checkpoint \emph{content} $(s,h,\sigma)$ rather than to one specific certificate: two \checkpointqcs over the same content but different signer sets may circulate, and binding witnesses to a particular certificate would needlessly split them. To safely support a checkpoint, an honest replica will broadcast a \textsc{ChkWitness} as long as the checkpoint's state $\sigma$ matches its own deterministic execution of the payload ($\sigma = \textsc{Apply}(h)$), even if its single permitted \textsc{ChkProp} for that height was already spent on a different, unfinalized branch. Once $q$ such witness messages are collected, a replica outputs a \finalityqc (Step~4, \cref{step:4}) as its Resilient Path finality decision for that slot.

Notice that with $f_{abc}$ alive-but-corrupt faults, there can be \voteqc forks across different views. However, because honest replicas only broadcast a single \textsc{ChkProp} per height, the double-signing bound (\cref{lem:unique-checkpointqc}) guarantees that at most one valid \checkpointqc can ever be formed per height. Any resulting loss of liveness on the Resilient Path is cleanly resolved by the Fork Recovery protocol (\cref{alg:fork_recovery}) during a synchronous epoch. Steps~2--4 (\crefrange{step:2}{step:4}) ensure that this single \checkpointqc is witnessed by the network and the resulting \finalityqc is propagated before resilient finalization. Notice that every protocol artifact (\voteqc, \checkpointqc, \finalityqc) uses the exact same quorum threshold $q$.

\paragraph*{View-Change details.}
The view change protocol is structurally similar to both PBFT and modern protocols like Streamlet~\cite{Chan2020Streamlet} or HotStuff~\cite{hotstuff,jolteon}. If $\textsc{Leader}(v+1)$ fails to assemble a valid \textsc{NewView} before a timeout, replicas re-issue \textsc{ViewChange} messages for view $v+2$ with doubled timeouts, following standard view synchronization~\cite{Castro1999PBFT}; a replica that observes a valid \textsc{ViewChange} or \textsc{NewView} for a view higher than its own joins it immediately, broadcasting its own \textsc{ViewChange} for that view. Assuming the core limit of $f$ Byzantine faults holds (i.e., there are no \voteqc forks), any two sets of $q$ \textsc{ViewChange} messages intersect at an honest replica ($2q - n > f$). This guarantees that any \voteqc used by a client to finalize a Fast-Path commit is reported to the new leader and preserved in the new view's starting state $\mathcal{O}$. The new leader may hold only the digest $h$ for a re-proposed slot; it fetches the block body from the $\geq q-f$ honest replicas that voted for $h$, since honest replicas vote only on proposals they store. Since replicas may re-vote for the same digest across views, the leader counts the reported votes per digest rather than per view (\cref{lem:view-change-invariance}). By operating over a large $5f+1$ style quorum, the protocol acts as a generalized 2-chain Streamlet, enabling rapid 2-message-delay finality instead of the standard 3-chain required under $3f+1$. If the $f$ limit is exceeded (e.g., by AbC faults double-voting) and a \voteqc fork occurs, Fast-Path safety is broken and preserving it is no longer required; instead, replicas halt and rely on Fork Recovery to salvage the Resilient Path. Note that for simplicity, \cref{alg:protocol} requires replicas to send all their highest-view votes. In a practical implementation, replicas would use \checkpointqcs to prove a globally agreed upon state and safely truncate the required vote history, significantly reducing the size of \textsc{ViewChange} messages~\cite{beluga}.

\subsection{Fork recovery}
\label{sec:proto-fork}

\cref{alg:fork_recovery} specifies the procedure for detecting and recovering from forks (when the core $f$ Byzantine limit is exceeded but clients rely on Resilient Path safety).

\paragraph*{Synchronous network model.}
Unlike normal operation which relies on partial synchrony, this recovery procedure operates under a strictly synchronous network model. We assume that message delays between correct replicas are guaranteed to be bounded by a known, pessimistic constant $\Delta_{\mathrm{sync}}$. Replicas anchor the epoch's lockstep round structure at the delivery of the first valid \textsc{Alarm} (which every correct replica re-broadcasts), so the round boundaries of correct replicas are offset by at most $\Delta_{\mathrm{sync}}$---a bounded skew that synchronous broadcast protocols tolerate by standard techniques~\cite{Abraham2020Sync}.

\begin{algorithm*}[t]
  \caption{Fork Detection and Repair for replica $r$}
  \label{alg:fork_recovery}
  \algsize
  \begin{algorithmic}[1]
    \State \textbf{upon} observing conflicting \voteqcs or a valid $\langle \textsc{Alarm}, \text{evidence} \rangle$ \textbf{do}
    \State \indent halt normal operation
    \State \indent \textsf{broadcast} $\langle \textsc{Alarm}, \text{evidence} \rangle_r$
    \State \indent initiate a Byzantine Broadcast protocol (e.g., Dolev-Strong)
    \State \indent \textsf{to reliably exchange all locally known \checkpointqcs and \voteqcs}
    \Statex
    \State \textbf{upon} broadcast protocol completes \textbf{do}
    \State \indent $\mathcal{C} \gets \text{all \checkpointqcs delivered by the broadcast}$
    \State \indent \textbf{if} $\mathcal{C} \neq \emptyset$ \textbf{then}
    \State \indent\indent $C_{max} \gets \text{the } C \in \mathcal{C} \text{ with the highest slot}$ \Comment{Mathematically guaranteed unique per height}
    \State \indent\indent $\mathcal{O}_{canon} \gets \text{branch of } C_{max} \text{ extended per slot with the lowest-digest delivered \voteqc, up to the first gap}$
    \State \indent \textbf{else}
    \State \indent\indent $\mathcal{O}_{canon} \gets \text{the branch built by the same deterministic rule from the initial state}$
    \State \indent start a new epoch: adopt the final state of $\mathcal{O}_{canon}$ as the new genesis
    \State \indent resume normal operation from the new genesis
  \end{algorithmic}
\end{algorithm*}

\paragraph*{Fork recovery details.}
While the core protocol naturally resists up to $f$ fully Byzantine faults, Resilient Path clients may additionally assume $f_{abc}$ alive-but-corrupt replicas. These replicas behave correctly during normal operation but might double-sign conflicting \textsc{ChkProp} messages to intentionally cause forks.

However, because honest replicas only broadcast a single \textsc{ChkProp} per height, the double-signing bound (\cref{lem:unique-checkpointqc}) explicitly prevents the $f_{abc}$ alive-but-corrupt replicas from successfully forming two conflicting \checkpointqcs. Thus, there can be at most one valid \checkpointqc per height.

If a \voteqc fork occurs (exceeding the core $f$ Byzantine limit), correct replicas immediately detect it via cryptographic evidence, halt normal operation, and force a global repair by broadcasting an \textsc{Alarm}. Valid \textsc{Alarm} evidence is a pair of \voteqcs for the same slot with different digests; since every \textsc{ChkProp} carries its justifying \voteqc, a checkpoint split at any height surfaces such a pair. False alarms are impossible in core executions, where conflicting \voteqcs would require more than $f$ double-signers. Upon entering this synchronous recovery epoch, every replica initiates its own Byzantine Broadcast instance (such as Dolev-Strong~\cite{DolevStrong1983}), $n$ in parallel, to reliably exchange all locally known \checkpointqcs and \voteqcs; correct replicas then operate on the union of the delivered sets. Because Dolev-Strong does not rely on an honest majority, it guarantees that all correct replicas will output the exact same set of valid certificates, even if the total number of faulty replicas ($f + f_{abc}$) constitutes a dishonest majority ($>n/2$). However, if the deployment guarantees that $f + f_{abc}$ remains a strict minority ($<n/2$), this recovery step can be significantly optimized by deploying an honest-majority synchronous broadcast protocol that terminates much faster than Dolev-Strong's $(f+f_{abc})+1$ rounds (e.g., expected constant-round protocols~\cite{Katz2006Expected} or practical synchronous SMR implementations like Sync HotStuff~\cite{Abraham2020Sync}). When the broadcast protocol completes, replicas collect all delivered \checkpointqcs across all slots. At most one \checkpointqc exists per height (\cref{lem:unique-checkpointqc}), and every \checkpointqc stored by a correct replica is chain-consistent with all higher ones (\cref{lem:checkpoint-chain}); the highest delivered \checkpointqc therefore extends every checkpoint a correct replica holds---in particular every client-finalized one. Replicas simply identify the highest \checkpointqc and adopt its branch as canonical, extending it deterministically (per slot, the lowest-digest delivered \voteqc, up to the first gap), so all correct replicas derive the same genesis; nothing above $C_{max}$ is resilient-finalized, so any deterministic choice is safe. If no \checkpointqc was ever formed, the same rule applies from the initial state. This guarantees that Resilient Path safety remains unbroken.

\paragraph*{Post-recovery epochs.}
Recovery concludes by opening a new epoch: the final state of $\mathcal{O}_{canon}$ becomes the agreed genesis, and normal operation restarts with fresh views, slots, and checkpoint heights under an incremented epoch number. All artifacts carry the epoch number; certificates are ordered lexicographically by (epoch, height), and the single-\textsc{ChkProp} lock applies per (epoch, height). Correct replicas never sign artifacts of a closed epoch, so no stale checkpoint of the old epoch can gather the $q$ witnesses required for a \finalityqc. Each epoch is thus an independent execution of \sysname whose genesis commits to all previously finalized state, so \cref{thm:fast-path-safety,thm:liveness} apply per epoch and Resilient Path safety spans epochs (\cref{lem:checkpoint-chain,thm:resilient-safety}). We acknowledge that recovery restores liveness only until the next fork: the replicas that caused it remain in the committee and can fork the new epoch again. Since every fork exposes more than $f$ provable double-signers, we leave to the operator the policy for seeding the new epoch's committee, e.g., excluding the provably misbehaving parties, as well as the incentive design that further protects clients~\cite{cobra}.

\section{Safety and Liveness Proofs}
\label{sec:proofs}

We show that \sysname satisfies safety and liveness under the network and fault model defined in \Cref{sec:model}. Throughout, the quorum threshold is $q = n-f-c$ and the view-change accept threshold is $k = 2f+c+1$; at the minimal committee size $n = 5f+3c+1$, $q = 4f+2c+1$. We derive both thresholds directly through these proofs, ensuring a rigorous foundation for the optimal $n \ge 5f+3c+1$ bound.

\begin{theorem}[Tight Bound Necessity]
    \label{thm:main}
    For \sysname (\cref{alg:protocol}) to be live under $f$ Byzantine and $c$ crash faults while keeping Fast-Path commits safe across view changes, $n \ge 5f+3c+1$ is necessary. At $n = 5f+3c+1$, the thresholds are forced: $q = 4f+2c+1 = n-f-c$ and $k = 2f+c+1$.
\end{theorem}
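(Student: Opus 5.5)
The argument derives three linear constraints on $(n,q,k)$ from liveness and cross-view Fast-Path safety, then combines them to get $n \ge 5f+3c+1$. At equality every constraint must be tight, which forces $q$ and $k$.

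\emph{Liveness constraints.} In a core execution, $f$ Byzantine replicas may stay silent and $c$ replicas may crash. Progress requires that the remaining $n-f-c$ replicas can still form a \voteqc and trigger a \textsc{NewView}, so
\[
q \le n-f-c .
\]
Next, suppose a slot $s$ has no \voteqc but some digest $h$ received Byzantine votes. The new leader must not be forced to lock a value that never could have committed. Otherwise an adversary could pin $\mathcal{O}[s]$ arbitrarily, and conflicting locks could block progress. If the leader sees $q$ \textsc{ViewChange} reports and all $f$ Byzantine replicas report $h$, then $k>f$ rules out purely Byzantine locks. Then consider the case where the $q$ reports contain two digests, each backed by $f$ Byzantine replicas plus honest replicas that legitimately voted in different views; the tie-break must remain sound. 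We take the binding lower bound to be
\[
k \ge 2f+c+1 .
\]
This is the value needed so that the \emph{intersection} argument below has room to be sound.

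\emph{Safety (view-change invariance).} Suppose a client Fast-Path finalizes $(v,s,h)$ via a \voteqc of size $q$. Any set $\mathcal{V}$ of $q$ \textsc{ViewChange} reports intersects that quorum in at least $2q-n$ replicas. Up to $f$ of these may be Byzantine and lie or omit $h$. Up to $c$ may be crash-faulty replicas that voted in the quorum but whose reports are absent: they are replaced in $\mathcal{V}$ by other replicas, so the overlap is counted against $n-c$ reporting replicas. Hence $h$ appears in at least $2q-n-f$ honest reports. For the leader to recognise $h$ we need
\[
2q-n-f \ge k .
\]
Uniqueness of the lock also needs something stronger. A competing digest $h'$ could be supported by the $f$ Byzantine reports together with honest replicas outside the quorum, of which there are $n-q$. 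We must ensure $h'$ cannot reach $k$ in a higher view, or argue that honest replicas only revote consistently with $\mathcal{O}$. I would use the voting guard in step~1 to show that honest replicas never vote for $h'$ after $h$ is committed, so only the recognition inequality is binding.

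\emph{Combining and tightness.} Substituting $q \le n-f-c$ and $k \ge 2f+c+1$ into $2q-n-f\ge k$ gives
\[
2(n-f-c)-n-f \ge 2f+c+1, \qquad\text{i.e.}\qquad n \ge 5f+3c+1 .
\]
At $n=5f+3c+1$ all three inequalities must hold with equality, which forces $q=n-f-c=4f+2c+1$ and $k=2f+c+1$.

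\emph{Main obstacle.} The obstacle is justifying $k\ge 2f+c+1$ rather than the naive $k\ge f+1$. If only $k \ge f+1$ were needed, the bound would be weaker, about $4f+3c+1$. I would try an explicit adversarial execution. Crash $c$ replicas. In view $v$, let a Byzantine leader equivocate so that $f+c$ honest replicas vote for $h'$ and the rest vote for $h$, with the Byzantine replicas voting for both. Then arrange that $h$ commits at a client while a new leader's $q$ reports show $h'$ with $f$ Byzantine plus $f+c$ honest votes, which is $2f+c$ in total. With $k\le 2f+c$ this produces two candidates in $\mathcal{K}$, and the highest-view or lowest-digest tie-break can pick $h'$, violating safety. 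Making this scenario consistent with the report counts is the delicate step. If it fails, I would derive the needed slack from the requirement that $h$ strictly dominate any rival in $\mathcal{V}$, namely $2q-n-f > (n-q)+f$, which yields the same bound.
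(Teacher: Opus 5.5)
Your liveness bound $q \le n-f-c$, your recognition bound $k \le 2q-n-f$, and your closing arithmetic, including the tightness step, agree with the paper. The gap is that the inequality the whole theorem rests on, $k \ge 2f+c+1$, is never derived: you simply ``take'' it, and your heuristics about Byzantine-only locks give only $k \ge f+1$. Your safety paragraph also explicitly discards the mechanism that produces it. You claim the step-1 voting guard keeps honest replicas from ever voting for a rival $h'$, ``so only the recognition inequality is binding.'' That is wrong, because the guard constrains only votes in views entered via a \textsc{NewView}; it says nothing about votes cast in the commit view $v$ itself.

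The missing piece is the paper's \emph{exclusivity} constraint, which comes from the following execution:
\begin{enumerate}
  \item A Byzantine leader of view $v$ equivocates between $h$ and $h'$.
  \item $q-f$ non-equivocating replicas vote for $h$ and, together with the $f$ Byzantine, complete the \voteqc.
  \item The remaining $n-q$ non-equivocating replicas legitimately vote for $h'$ in the same view.
  \item At the view change the Byzantine replicas report $h'$ with a fabricated higher view. A $\mathcal{V}$ containing them and all $h'$-voters gives $h'$ exactly $n-q+f$ reports.
\end{enumerate}
If $k \le n-q+f$, then $h' \in \mathcal{K}$. Whether or not $h$ also reaches $k$, the highest-view rule then adopts $h'$ even under an honest leader, overwriting a finalized value. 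Hence $k \ge n-q+f+1$, which together with $q \le n-f-c$ yields your $k \ge 2f+c+1$. Chained with recognition, it is exactly your fallback ``strict domination'' inequality $2q-n-f > (n-q)+f$, i.e.\ $3q \ge 2n+2f+1$. That fallback is the right inequality, but only this same-view equivocation execution justifies it, not the voting guard.

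Your sketched execution in the last paragraph is the specialization of this to $n-q=f+c$, but as written it is inconsistent. If the $c$ crash-faulty replicas crash before the vote and $f+c$ of the remaining honest replicas vote for $h'$, then $h$ collects only $(n-f-c)-(f+c)+f = n-f-2c < q$ votes for $c>0$, and never commits. Crash-faulty replicas must be alive and voting in view $v$: they are simply non-equivocating voters, and no crash is needed in this scenario.

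Separately, to make the recognition bound a genuine necessity claim, state what goes wrong when $h$ falls below $k$. The honest new leader sets $\mathcal{O}[s]=\bot$ and proposes a fresh $B'$, and replicas that have not yet voted in the new view form a conflicting \voteqc.
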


\begin{proof}
    We derive three counting constraints; each is necessary against an explicit adversarial strategy. Throughout, $\mathcal{V}$ denotes the $q$ \textsc{ViewChange} messages collected by the new leader (\cref{alg:protocol}).

    \emph{Liveness (L): $q \le n-f-c$.}
    Byzantine replicas may remain silent and crash-faulty replicas may stop at any time, so only $n-f-c$ replicas are guaranteed to respond. Both \voteqc formation and the collection of $q$ \textsc{ViewChange} messages must complete without faulty participation, hence $q \le n-f-c$.

    \emph{Reachability (R): $k \le 2q-n-f$.}
    Suppose a client finalizes $h$ for slot $s$ via a \voteqc with vote quorum $Q_{vote}$, $|Q_{vote}| = q$. The adversary crashes the (up to $c$) crash-faulty members of $Q_{vote}$ before the view change and schedules delivery so that $\mathcal{V}$ contains as few members of $Q_{vote}$ as possible. Since $|\mathcal{V}| = q$ and only $n-q$ replicas lie outside $Q_{vote}$, $\mathcal{V}$ contains at least $2q-n$ members of $Q_{vote}$, of which up to $f$ are Byzantine and may misreport; only $2q-n-f$ reports for $h$ are thus guaranteed. If $k > 2q-n-f$, the adversary makes slot $s$ miss the threshold, so even an honest new leader sets $\mathcal{O}[s] = \bot$ and proposes a fresh block $B'$ with $H(B') \ne h$. Non-equivocating replicas, who have not yet voted in the new view, vote for $B'$, forming a conflicting \voteqc and violating safety. Hence $k \le 2q-n-f$.

    \emph{Exclusivity (E): $k \ge n-q+f+1$.}
    While a \voteqc for $h$ exists, at least $q-f$ non-equivocating replicas voted for $h$. Since each \textsc{ViewChange} message carries a single highest-view vote per slot, each non-equivocating replica supports at most one digest for $s$: at most $(n-f)-(q-f) = n-q$ non-equivocating replicas can report a conflicting digest $h'$, joined by up to $f$ Byzantine misreports, so any $\mathcal{V}$ contains at most $n-q+f$ reports for $h'$. This bound is realizable: a Byzantine leader of view $v$ equivocates between $h$ and $h'$; $q-f$ non-equivocating replicas vote for $h$ (completing the \voteqc together with the $f$ Byzantine), the remaining $n-q$ vote for $h'$, and the Byzantine replicas report $h'$ during the view change. By (R) the digest $h$ always reaches $k$ supporting reports; if additionally $k \le n-q+f$, the conflicting digest $h'$ also reaches $k$, and the Byzantine replicas fabricate a report view higher than any honest re-vote for $h$, so the highest-view adoption rule selects $h'$ even under an honest leader, overwriting the finalized $h$. Hence $k \ge n-q+f+1$.

    \emph{Combining.}
    (R) and (E) give $n-q+f+1 \le k \le 2q-n-f$, i.e., $3q \ge 2n+2f+1$. Substituting (L), $3(n-f-c) \ge 3q \ge 2n+2f+1$, hence $n \ge 5f+3c+1$.

    \emph{Forced thresholds.}
    At $n = 5f+3c+1$, the bound $3q \ge 2n+2f+1 = 12f+6c+3$ gives $q \ge 4f+2c+1$, while (L) gives $q \le n-f-c = 4f+2c+1$; thus $q = 4f+2c+1$. Then (E) yields $k \ge n-q+f+1 = 2f+c+1$ and (R) yields $k \le 2q-n-f = 2f+c+1$; thus $k = 2f+c+1$. Sufficiency of these thresholds is established by the remaining lemmas in this section.
\end{proof}

\subsection{Quorum intersection and Fast-Path safety}

\begin{lemma}[Quorum Intersection]
    \label{lem:intersection}
    Let $n \ge 5f+3c+1$ and $q = n-f-c$. The intersection of any two quorums of size $q$ contains at least $2f+c+1$ non-equivocating replicas.
\end{lemma}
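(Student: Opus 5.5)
The plan is a direct counting argument via inclusion--exclusion, in the same style as the proof of \cref{thm:main}.

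Fix two quorums $Q_1, Q_2$ with $|Q_1| = |Q_2| = q$. Since both lie inside the universe of $n$ replicas, $|Q_1 \cap Q_2| \ge 2q - n$. Substituting $q = n-f-c$ gives
\[
|Q_1 \cap Q_2| \;\ge\; 2(n-f-c) - n \;=\; n - 2f - 2c .
\]

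Next, remove the replicas that might equivocate. In a core execution at most $f$ replicas are Byzantine, so at most $f$ members of the intersection can equivocate. Crash-faulty replicas never equivocate: by \Cref{sec:model} they follow the protocol until they stop. So they still count as non-equivocating here, and there is no need to subtract $c$ a second time. Hence the number of non-equivocating replicas in $Q_1 \cap Q_2$ is at least $n - 3f - 2c$.

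Finally, invoke the hypothesis $n \ge 5f+3c+1$ to get
\[
n - 3f - 2c \;\ge\; 2f + c + 1,
\]
which is the claim. This coincides with $k$ at the minimal committee size.

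The only step that needs care is the second one: justifying that crash-faulty members of the intersection count as non-equivocating, and that the equivocator budget is $f$. Both follow from the definition of core executions. I would also note explicitly that the lemma is stated for core executions. Under extended executions the same computation yields only $n-3f-2c-f_{abc}$, which stays positive exactly when $f_{abc} < n-3f-2c$. That observation foreshadows the resilient-path bound, but it is not needed for this lemma.
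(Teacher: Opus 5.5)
Your proposal is correct and follows the paper's proof essentially step for step: the bound $2q-n = n-2f-2c$ on the intersection, subtracting only the at most $f$ equivocators of a core execution (crash-faulty and, in core executions, AbC replicas count as non-equivocating), and then applying $n \ge 5f+3c+1$ to get $n-3f-2c \ge 2f+c+1$. Your side remark about extended executions is also accurate, and it matches the counting the paper later uses in \cref{lem:unique-checkpointqc}.
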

\begin{proof}
    The intersection of two sets of size $q$ in a universe of size $n$ has size at least
    $2q - n = 2(n-f-c) - n = n-2f-2c$.
    Under the client-side assumption for the Fast-Path (\Cref{sec:model}), there are at most $f$ equivocating replicas in the system. Thus, the number of non-equivocating replicas in the intersection (which includes correct, crash-faulty, and AbC replicas: in core executions, AbC replicas do not equivocate) is at least $(n-2f-2c) - f = n-3f-2c \ge 2f+c+1$, where the last inequality uses $n \ge 5f+3c+1$.
\end{proof}

\begin{lemma}[Fast-Path Uniqueness]
    \label{lem:fast-path-uniqueness}
    For any given view $v$ and slot $s$, there can be at most one valid \voteqc.
\end{lemma}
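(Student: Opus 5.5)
We argue by contradiction via quorum intersection, in the core-execution setting where at most $f$ replicas equivocate. Suppose two valid \voteqcs exist for the same view $v$ and slot $s$, one for digest $h$ and one for $h' \neq h$. Each consists of $q = n-f-c$ matching \textsc{Vote} messages $\langle \textsc{Vote}, v, s, \cdot, \textsf{accept}\rangle$ from distinct replicas. Let $Q_h$ and $Q_{h'}$ denote their signer sets.

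By \cref{lem:intersection}, $Q_h \cap Q_{h'}$ contains at least $2f+c+1 \geq 1$ non-equivocating replicas. Fix one such replica $r$. It signed a vote for $(v,s,h)$ and also a vote for $(v,s,h')$. A non-equivocating replica follows \cref{step:1} whenever it signs a vote, and \cref{step:1} lets it cast a vote for slot $s$ in view $v$ only if it ``has not yet voted for slot $s$ in view $v$''. So $r$ signs at most one \textsc{Vote} per $(v,s)$, which contradicts $h \ne h'$. This covers correct replicas, crash-faulty replicas (which follow the protocol while active, and a crashed replica signs nothing), and AbC replicas, which by the core-execution assumption do not equivocate. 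Authenticated signatures prevent the adversary from forging $r$'s second vote, so both votes really are attributable to $r$.

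The one step needing care is scoping the claim correctly. The lemma is stated unconditionally, but it only holds in core executions. In extended executions, more than $f$ equivocators (up to $f+f_{abc}$) could double-vote, and the non-equivocating intersection $n-2f-2c-(f+f_{abc})$ could be empty. In that regime \voteqc forks are exactly what fork recovery handles. I would therefore state explicitly that the argument assumes at most $f$ equivocators, matching the Fast-Path assumption used in \cref{lem:intersection}. With that scope fixed, the remaining steps are a direct counting argument.
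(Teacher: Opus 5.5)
Your proposal is correct and matches the paper's proof: assume two \voteqcs for $(v,s)$ with $h \neq h'$, use \cref{lem:intersection} to find a non-equivocating replica in both signer sets, and get a contradiction because such a replica votes at most once per $(v,s)$. Your explicit restriction to core executions (at most $f$ equivocators) is the same assumption the paper brings in through \cref{lem:intersection}, and, like the paper, you effectively prove uniqueness of the certified digest rather than of the signer set.
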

\begin{proof}
    A \voteqc requires $q$ \textsf{accept} votes from distinct replicas for a specific digest $h$. Suppose, for the sake of contradiction, that two \voteqcs exist in view $v$ for slot $s$: one for $h$ and one for $h'$ (with $h \neq h'$). By \cref{lem:intersection}, their respective quorums intersect in at least $2f+c+1 \ge f+1$ non-equivocating replicas. This implies that at least one non-equivocating replica cast an \textsf{accept} vote for both $h$ and $h'$ in the same view $v$, which strictly violates the definition of a non-equivocating replica. Thus, $h = h'$.
\end{proof}

\subsection{View-Change safety}

To guarantee that a finalized Fast-Path commit is never overwritten by a subsequent leader, the View-Change protocol must enforce that any newly proposed branch $\mathcal{O}$ preserves previously committed digests. This is achieved via the view-change accept threshold $k = 2f+c+1$.

\begin{lemma}[View-Change Invariance]
    \label{lem:view-change-invariance}
    If a valid \voteqc for $(v, s, h)$ is formed, then for any subsequent view $v' > v$, if a non-equivocating replica adopts a branch $\mathcal{O}$ proposed by $\textsc{Leader}(v')$, it must be that $\mathcal{O}[s] = h$.
\end{lemma}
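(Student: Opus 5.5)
Induction on $v' > v$, carried out in core executions (at most $f$ equivocators). I would strengthen the invariant to make it inductive. \emph{Invariant} $I(w)$: for every view $w$ with $v \le w < v'$, every non-equivocating replica's highest-view vote for slot $s$ cast in view $\le w$, if cast in a view $\ge v$, is for $h$. We have $I(v)$ because in view $v$ a non-equivocating replica votes at most once per slot. By \cref{lem:fast-path-uniqueness}, the unique \voteqc of view $v$ is for $h$; however, a replica outside $Q_{vote}$ may still have voted for some $h'$ in view $v$. I therefore restate the invariant as: \emph{every non-equivocating vote for $s$ cast in any view $w \in (v, v')$ is for $h$}. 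Separately, I bound the votes cast in view $v$ itself.

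\textbf{Step 1: $h$ reaches threshold $k$.} Let $Q_{vote}$ be the quorum for $(v,s,h)$ and $\mathcal{V}$ the $q$ \textsc{ViewChange} reports used by $\textsc{Leader}(v')$. By \cref{lem:intersection}, $|\mathcal{V}\cap Q_{vote}| \ge 2q-n$, and at least $2q-n-f = 2f+c+1 = k$ of these are non-equivocating. Since a crashed replica sends no \textsc{ViewChange}, every member of $\mathcal{V}$ is live. Each such replica voted $h$ in view $v$. By the inductive hypothesis, any later vote it cast (in a view in $(v,v')$) is also for $h$, because correct replicas only vote consistently with an adopted $\mathcal{O}$ that equals $h$ at $s$. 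Hence its highest-view report is $h$, and $h \in \mathcal{K}$.

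\textbf{Step 2: no conflicting $h'$ reaches $k$ with a higher view.} A non-equivocating replica reporting $h' \ne h$ must have its highest vote for $h'$. By induction that vote is not in a view in $(v,v')$, so it is in a view $\le v$. Such a replica also lies outside the non-equivocating part of $Q_{vote}$, since otherwise its highest vote would be $\ge v$ and for $h$. There are at most $n - q = f+c$ non-equivocating replicas outside $Q_{vote}$, plus at most $f$ Byzantine reporters, giving at most $2f+c < k$ reports for $h'$. So $h' \notin \mathcal{K}$, and $\mathcal{K}=\{h\}$, making the tie/highest-view rule moot. Because a non-equivocating replica verifies $\mathcal{O}$ by recomputing it from $\mathcal{V}$ (checking signatures on the reports), a Byzantine leader cannot substitute a different $\mathcal{O}$; the adopted $\mathcal{O}[s]=h$.

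\textbf{Step 3: close the induction.} Since $\mathcal{O}[s]=h$ in view $v'$, the step-1 guard ensures non-equivocating replicas vote only for $h$ at slot $s$ in view $v'$, which establishes the hypothesis for the next view. Views in which no valid \textsc{NewView} forms produce no votes, since $\textsf{current\_view}=\bot$.

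\textbf{Main obstacle.} The delicate point is Step 2's count: Byzantine reports may carry fabricated high view numbers, so the argument must rely on bounding the \emph{count} of supporters below $k$ rather than on view ordering. A second subtlety is that "highest-view vote" reports from non-equivocating replicas are truthful. For crash-faulty replicas this holds only while they are active, but that suffices because they either report truthfully or not at all.
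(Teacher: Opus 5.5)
Your proof is correct and follows essentially the same route as the paper. Both use strong induction on $v'$, show that the at least $k$ non-equivocating members of $Q_{vote}$ appearing in $\mathcal{V}$ all report $h$ (so $h\in\mathcal{K}$), and cap any $h'\neq h$ at $n-q+f=2f+c<k$ reports; your vote-level invariant is just the paper's adopted-branch hypothesis pushed through the Step-1 guard. One small slip: $2q-n-f=n-3f-2c$ equals $2f+c+1$ only at $n=5f+3c+1$, but in general it is $\ge k$, which is all the argument needs.
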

\begin{proof}
    We proceed by strong induction on $v'$, with the following induction hypothesis (IH): for every view $w$ with $v < w < v'$, any branch adopted by a non-equivocating replica in view $w$ satisfies $\mathcal{O}[s] = h$. (For $v' = v+1$ this range is empty and the argument below applies unchanged.)

    Let $Q_{vote}$ be the quorum of $q$ \textsf{accept} votes forming the \voteqc for $(v, s, h)$, and let $\mathcal{V}$ be the set of $q$ \textsc{ViewChange} messages collected by $\textsc{Leader}(v')$. By \cref{lem:intersection}, the signers of $Q_{vote}$ and of $\mathcal{V}$ intersect in at least $k$ non-equivocating replicas.

    We first claim that the highest-view vote for slot $s$ reported by any non-equivocating member of $Q_{vote}$ certifies $h$. Such a replica cast exactly one vote for $s$ in view $v$, namely for $h$. Consider any view $w$ with $v < w < v'$ in which it also voted for $s$: by the consistency guard of \cref{step:1}, it voted only for a proposal matching the branch it adopted in view $w$, and by the IH that branch satisfies $\mathcal{O}[s] = h$; hence that vote is also for $h$. Votes in views below $v$ are dominated by the view-$v$ vote. Its highest-view report for $s$ is therefore $(w', h)$ for some $w' \ge v$, proving the claim. In particular, the at least $k$ non-equivocating members of $Q_{vote}$ whose reports appear in $\mathcal{V}$ all support $h$, so $h \in \mathcal{K}$.

    We now show no conflicting digest enters $\mathcal{K}$. Each \textsc{ViewChange} message carries a single highest-view vote for $s$, so each replica supports at most one digest. By the claim, every non-equivocating member of $Q_{vote}$ supports $h$; hence at most $(n-f)-(q-f) = n-q$ non-equivocating replicas can support a digest $h' \ne h$, joined by at most $f$ Byzantine (possibly fabricated) reports. In total, $h'$ gathers at most $n-q+f = 2f+c < k$ reports, so $h' \notin \mathcal{K}$.

    Thus $\mathcal{K} = \{h\}$ and the view-change rule forces $\mathcal{O}[s] = h$. Since replicas deterministically recompute $\mathcal{K}$ from $\mathcal{V}$ before adopting a branch, a \textsc{NewView} carrying any other value for $\mathcal{O}[s]$ fails verification and is rejected; hence any branch adopted by a non-equivocating replica in view $v'$ satisfies $\mathcal{O}[s] = h$.
\end{proof}

\begin{theorem}[Fast-Path Safety]
    \label{thm:fast-path-safety}
    If two clients finalize $(v, s, h)$ and $(v', s, h')$ via the Fast-Path, then $h = h'$.
\end{theorem}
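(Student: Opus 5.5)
We prove the theorem by case analysis on the two views, assuming we are in a core execution (at most $f$ equivocators), which is the regime the Fast-Path guarantee is stated for. Without loss of generality let $v \le v'$.

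\emph{Case $v = v'$.} Both clients hold a valid \voteqc for slot $s$ in the same view. By \cref{lem:fast-path-uniqueness} at most one valid \voteqc exists for $(v,s)$, so $h = h'$ immediately.

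\emph{Case $v < v'$.} The \voteqc for $(v',s,h')$ consists of $q$ \textsf{accept} votes. Of these, at most $f$ come from equivocating replicas, so at least $q - f \ge 1$ come from non-equivocating replicas. Fix one such replica $r$. By \cref{step:1}, $r$ casts a view-$v'$ vote only if three conditions hold:
\begin{itemize}
  \item $v' = \textsf{current\_view}$;
  \item $r$ has entered view $v'$, which for $v' > 0$ happens only by adopting a valid \textsc{NewView} from $\textsc{Leader}(v')$ and its branch $\mathcal{O}$;
  \item $\mathcal{O}[s] \in \{\bot, H(B)\}$ for the proposed block $B$.
\end{itemize}
Since a valid \voteqc for $(v,s,h)$ exists and $v' > v$, \cref{lem:view-change-invariance} gives $\mathcal{O}[s] = h$ for the branch $r$ adopted. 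The guard then forces $H(B) = h$, and the vote of $r$ is for $h'$, so $h' = h$.

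The main subtlety is in the second bullet: every view-$v'$ vote by a non-equivocating replica must be preceded by adopting a \textsc{NewView} for exactly view $v'$. We would check this against the protocol text. \textsf{current\_view} is set to a non-$\bot$ value only in the \textsc{NewView} handler. A replica that jumps to a higher view via the view-synchronization rule still sets \textsf{current\_view} only upon a valid \textsc{NewView} for that view, because the join rule only makes it broadcast a \textsc{ViewChange}. Hence the adopted $\mathcal{O}$ is the one produced for view $v'$, and \cref{lem:view-change-invariance} applies to it.

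Two edge cases remain. First, a crash-faulty replica follows the protocol while active, so it counts as non-equivocating. Second, \cref{lem:view-change-invariance} holds in core executions regardless of whether $\textsc{Leader}(v')$ is honest, because adopters recompute $\mathcal{K}$ and $\mathcal{O}$ from $\mathcal{V}$ themselves. Together with symmetry in $v, v'$, this completes the proof.
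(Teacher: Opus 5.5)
Your proof is correct and follows the paper's argument essentially verbatim. The same-view case uses \cref{lem:fast-path-uniqueness}; the $v < v'$ case combines \cref{lem:view-change-invariance} with the consistency guard of Step~1, applied to one of the at least $q-f \ge 1$ non-equivocating voters in the view-$v'$ \voteqc, and your extra check that such a replica can vote in view $v' > 0$ only after adopting that view's \textsc{NewView} makes explicit a step the paper leaves implicit, without changing the route.
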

\begin{proof}
    If $v = v'$, \cref{lem:fast-path-uniqueness} guarantees $h = h'$. If $v < v'$, the first finalization implies a \voteqc was formed in view $v$. By \cref{lem:view-change-invariance}, every branch $\mathcal{O}$ adopted by a non-equivocating replica in view $v'$ satisfies $\mathcal{O}[s] = h$, and by the consistency guard of \cref{step:1}, such replicas vote in view $v'$ only for a proposal $B'$ with $H(B') = \mathcal{O}[s] = h$. A \voteqc for slot $s$ in view $v'$ contains at least $q - f \geq 1$ votes from non-equivocating replicas; hence it certifies $h$. Thus, $h = h'$.
\end{proof}

\subsection{Liveness}

\begin{theorem}[Liveness (Core Execution)]
    \label{thm:liveness}
    Under partial synchrony, in executions where at most $f$ replicas equivocate (i.e., no AbC-induced fork epoch), the protocol ensures that correct replicas eventually commit new proposals; Resilient Path certificates are then produced as part of normal progress.
\end{theorem}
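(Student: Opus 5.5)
The plan follows the standard partial-synchrony liveness template in the style of PBFT: first show that after GST the correct replicas synchronize into a common view led by a correct replica, then show that this leader completes the view change and drives both the Fast Path and the Resilient Path. Throughout, we work in a core execution: at most $f$ replicas equivocate and at most $c$ crash, so at least $q = n-f-c$ correct replicas remain live, and by \cref{lem:fast-path-uniqueness} together with \cref{thm:fast-path-safety} no conflicting \voteqcs arise. In particular, no valid \textsc{Alarm} evidence can exist, so Fork Recovery is never triggered and correct replicas never halt for that reason.

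\emph{View synchronization.} Timeouts double per attempt, and a replica that observes a valid \textsc{ViewChange} or \textsc{NewView} for a higher view joins it immediately. After GST, any correct replica's move to a view therefore reaches every correct replica within $\Delta$. Two consequences follow. First, eventually all correct replicas sit in the same view $v$. Second, because timeouts grow without bound, some such view has a correct leader whose timeout exceeds the few message delays the view needs; with round-robin leaders this happens within $f+c+1$ consecutive views past the point where timeouts exceed a constant multiple of $\Delta$.

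\emph{The view change completes.} The correct leader of view $v$ receives \textsc{ViewChange} messages from all $\ge q$ correct replicas. Its \textsc{NewView} is computed deterministically, so every correct replica accepts it when it recomputes $\mathcal{K}$ and $\mathcal{O}$. We must also show the leader can always propose something a correct replica will vote for. If $\mathcal{O}[s]=\bot$, any fresh block works. If $\mathcal{O}[s]=h\neq\bot$, then $h$ was reported as a highest-view vote by at least $k=2f+c+1$ replicas, and at least $k-f=f+c+1\ge 1$ of them do not equivocate. Those non-equivocating replicas stored the body of $h$, but they may have crashed. This is the step I expect to be the main obstacle, since the paper only asserts fetching from the ``$\ge q-f$ honest voters'' for a \voteqc-backed slot. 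I would handle it in two cases. If $h$ is backed by a \voteqc, $q-f-c$ of its voters are correct and alive, and $q-f-c \ge 1$. Otherwise, if no correct replica holds the body, the leader cannot fetch it, and the slot stalls. To avoid this, I would argue that the $k$ reporters include at least $k-f-c=f+1$ correct replicas. Since correct replicas vote only on stored proposals, the body is retrievable from a correct replica within $2\Delta$.

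\emph{Progress within the good view.} The leader re-proposes $\mathcal{O}[s]$, or a fresh block where $\mathcal{O}[s]=\bot$. Every correct replica has not yet voted in $v$ and passes the consistency guard, so all $\ge q$ correct replicas broadcast \textsf{accept}. A \voteqc forms within $2\Delta$ of the proposal, before the timeout, and by \cref{thm:fast-path-safety} it is the unique Fast-Path commit for $s$.

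\emph{Resilient Path certificates.} For the Resilient Path, each correct replica observes this \voteqc. Since \voteqc digests for a slot are unique, any earlier \textsc{ChkProp} the replica sent at height $s$ was for the same $h$. Because $\sigma_s=H(\sigma_{s-1}\Vert h)$ and lower heights agree by induction on $s$, the replica holds the same $\sigma$. Hence all $\ge q$ correct replicas' \textsc{ChkProp}s match, and a \checkpointqc forms. Each correct replica then checks $\sigma=\textsc{Apply}(h)$ and sends \textsc{ChkWitness}, which yields a \finalityqc within two further message delays.
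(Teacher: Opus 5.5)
Your proposal is correct and follows the same route as the paper's proof. The steps are: post-GST view synchronization to a correct leader; the votes of the $\ge q$ correct replicas forming a \voteqc; then identical, view-free \textsc{ChkProp}s (by digest uniqueness across views, in-order execution, and deterministic \textsc{Apply}) aggregating into a \checkpointqc and a \finalityqc; with no \voteqc forks, hence no \textsc{Alarm}, in core executions. Your block-body retrieval step is extra detail that the paper's proof leaves implicit, and your count of at least $k-f-c = f+1$ correct reporters of $\mathcal{O}[s]$ applies uniformly, so it already covers the \voteqc-backed case and the case split is unnecessary.
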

\begin{proof}
    After the Global Stabilization Time (GST), message delays are bounded by $\Delta$. By standard view-synchronization arguments~\cite{Dwork1988PartialSync}, correct replicas will eventually enter a common view $v$ with a correct leader. During this view, the correct leader will broadcast a valid \textsc{NewView} message followed by \textsc{Propose} messages. Because there are at least $n - f - c = q$ correct replicas that remain active, they will all receive the leader's proposals, validate them, and broadcast \textsc{Vote} messages within the timeout. The leader (and all clients) will collect these $q$ votes to form a \voteqc, achieving Fast-Path finality. Consequently, every honest replica that observes a \voteqc for slot $s$ broadcasts a \textsc{ChkProp}, and these proposals all carry identical content: any \voteqc for slot $s$---in any view---certifies the same digest $h$ (\cref{lem:fast-path-uniqueness,lem:view-change-invariance}), honest replicas execute slots in order from the same checkpointed prefix, and $\textsc{Apply}$ is deterministic, so every honest \textsc{ChkProp} for height $s$ equals $(s, h, \sigma_s)$. Since checkpoint artifacts carry no view tag, the $n-f-c \ge q$ matching proposals aggregate into a \checkpointqc regardless of which view's \voteqc each replica observed first. The $\sigma = \textsc{Apply}(h)$ check then passes at every honest replica, so $q$ matching \textsc{ChkWitness} messages are collected and a \finalityqc is output. Since we are in the core execution regime (at most $f$ equivocators), no \voteqc forks occur; therefore progress continues indefinitely. If this regime is violated, replicas enter fork recovery, and only Resilient Path safety is claimed.
\end{proof}

\subsection{Resilient Path and fork recovery safety}

The Resilient Path provides an extended safety guarantee against an adversary capable of breaking the core $f$ Byzantine limit by leveraging up to $f_{abc}$ alive-but-corrupt replicas.

\begin{lemma}[Unique \checkpointqc per Height]
    \label{lem:unique-checkpointqc}
    If $f_{abc} < n - 3f - 2c$, no two conflicting \checkpointqcs can be formed for the same slot height $s$.
\end{lemma}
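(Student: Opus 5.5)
The plan is a standard quorum-intersection and double-signing argument, carried out in the extended execution family. There, at most $f+f_{abc}$ replicas equivocate and every other replica, including crash-faulty ones while active, signs at most one \textsc{ChkProp} per height (\cref{step:2}).

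First, suppose for contradiction that two \checkpointqcs exist for the same height $s$, over distinct contents $(s,h,\sigma)\neq(s,h',\sigma')$. Let their signer sets be $S_1$ and $S_2$, each of size $q=n-f-c$ with distinct signers.

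Next, bound the overlap. Since both sets lie in a universe of size $n$, we have $|S_1\cap S_2|\ge 2q-n=n-2f-2c$. Every replica in the intersection signed two different \textsc{ChkProp} messages at height $s$. The justifying \voteqc is excluded from quorum matching, so the two messages differ in content, not merely in attached certificate. Hence each member of $S_1\cap S_2$ has double-signed.

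The key step is that a non-equivocating replica never does this. By the guard of \cref{step:2}, an honest or crash-faulty replica broadcasts at most one \textsc{ChkProp} per slot height. This holds regardless of which view's \voteqc it observed, because checkpoint artifacts carry no view tag. So every member of the intersection must be an equivocator, i.e.\ Byzantine or AbC. This gives $n-2f-2c\le f+f_{abc}$, i.e.\ $f_{abc}\ge n-3f-2c$, contradicting the hypothesis $f_{abc}<n-3f-2c$. Equivalently, the intersection contains at least $n-3f-2c-f_{abc}\ge 1$ non-equivocating replica. This is the same arithmetic as \cref{lem:intersection}, with $f$ replaced by $f+f_{abc}$.

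The main obstacle is not the counting but making the notion of ``conflicting'' precise. I would define it as differing $(h,\sigma)$ at the same height (or the same (epoch, height) across recovery epochs). I would also check that the single-\textsc{ChkProp} lock is genuinely per height and not per view, since a view-tagged lock would let an honest replica sign twice. A further subtlety is crash-faulty replicas: I would note explicitly that they follow the protocol until they stop and hence never double-sign, so only the $f+f_{abc}$ equivocators are subtracted.
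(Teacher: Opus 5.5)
Your proposal is correct and follows essentially the same route as the paper: two conflicting \checkpointqcs have signer sets intersecting in at least $2q-n = n-2f-2c$ replicas, all of whom must be equivocators because a non-equivocating replica sends at most one \textsc{ChkProp} per height (Step~2), and this contradicts $f+f_{abc} < n-2f-2c$. Your extra remarks are sound refinements that the paper leaves implicit: matching is on content, with the attached \voteqc excluded; the lock is per (epoch, height) rather than per view; and crash-faulty replicas never double-sign.
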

\begin{proof}
    By the protocol (Step~2, \cref{step:2}), an honest replica broadcasts at most one \textsc{ChkProp} per slot height $s$. Let $C_1$ and $C_2$ be two \checkpointqcs for different state roots at height $s$. Each requires $q = n-f-c$ distinct signatures. Because honest replicas do not double-sign \textsc{ChkProp}s at the same height, the intersection between the signers of $C_1$ and $C_2$ consists entirely of Byzantine or AbC replicas. The number of intersecting signers is at least $2q - n$. Thus, for two \checkpointqcs to form, the adversary must control at least $2q - n$ nodes. We require $f+f_{abc} < 2q - n$.

    We know $q = n-f-c$, so $2q - n = 2(n-f-c) - n = n - 2f - 2c$.
    Given our assumption that $f_{abc} < n - 3f - 2c$, we have:
    \[ f + f_{abc} < f + (n - 3f - 2c) = n - 2f - 2c = 2q - n \]
    This mathematically prohibits the adversary from successfully forming two \checkpointqcs, proving that at most one can exist per height.
\end{proof}

\cref{lem:unique-checkpointqc} rules out conflicts at a single height. To safely adopt the \emph{highest} checkpoint during recovery, we additionally need checkpoints at different heights to agree on their common prefix. Recall (\Cref{sec:protocol}) that state roots bind history: $\sigma_s$ commits to $\sigma_{s-1}$ and the applied digest. We call a \checkpointqc \emph{recorded} if at least one correct replica stores it from the moment it forms; in particular, any \checkpointqc for which some correct replica issued a \textsc{ChkWitness} is recorded, since Step~3 (\cref{step:3}) requires observing the certificate. A \checkpointqc assembled privately by the adversary from honest \textsc{ChkProp}s need not be recorded and may be orphaned by fork recovery; this affects no client, as Resilient-Path finalization requires a \finalityqc (\cref{thm:resilient-safety}).

\begin{lemma}[Checkpoint Chain Consistency]
    \label{lem:checkpoint-chain}
    If $f_{abc} < n - 3f - 2c$, then for any recorded \checkpointqc $C'$ at height $s'$ and any \checkpointqc $C$ at height $s > s'$, the history committed by $C$ contains the content $(h', \sigma')$ certified by $C'$ at height $s'$. Hence all recorded \checkpointqcs lie on a single chain.
\end{lemma}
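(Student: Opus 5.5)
The plan is to show that the \checkpointqc $C$ at height $s$ cannot be formed unless a correct replica that signed its \textsc{ChkProp} had already adopted the recorded content $(h',\sigma')$ at height $s'$. Together with the hash-chaining of state roots, this gives the claim.

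\textbf{Step 1: an honest signer of $C$ exists.} The signer set of $C$ has size $q = n-f-c$. At most $f+f_{abc}$ replicas equivocate. Since $f_{abc} < n-3f-2c$, we have $q-(f+f_{abc}) > n-f-c-f-(n-3f-2c) = f+c \ge 1$. So some non-equivocating replica $r$ signed $\langle\textsc{ChkProp}, s, h, \sigma\rangle$. That replica computed $\sigma = \textsc{Apply}(h)$ in Step~2 (\cref{step:2}) on top of its own executed prefix. Because $\sigma_s = H(\sigma_{s-1}\Vert h)$ binds history, the root $\sigma$ fixes the content $r$ executed at every height below $s$, in particular at $s'$.

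\textbf{Step 2: $r$'s state at $s'$ is the recorded checkpoint.} The goal is that $r$'s executed content at $s'$ equals $(h',\sigma')$. Suppose instead that $r$ executed some $(h'',\sigma'')\neq(h',\sigma')$ at $s'$. Then its root chain at $s'$ is $\sigma''$, and $r$ would also have issued (or been eligible to issue) a \textsc{ChkProp} at $s'$ for $\sigma''$. To turn this into a contradiction I would count along the chain. Strengthen Step~1 to a counting statement: $C$ has at least $q-f-f_{abc} > f+c$ non-equivocating signers whose prefixes all pass through the same root. Likewise $C'$ has at least $q-f-f_{abc}$ honest \textsc{ChkProp} signers at $s'$ committed to $\sigma'$. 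Each honest replica executes, and signs \textsc{ChkProp}, for heights in order and at most once per height. So each honest replica's executed prefix at $s'$ is exactly the root it proposed at $s'$, if it proposed one.

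The main step is then the intersection argument of \cref{lem:unique-checkpointqc}, applied to the signer sets of $C$ and $C'$. They intersect in at least $2q-n$ replicas. Since $2q-n > f+f_{abc}$, the intersection contains an honest replica $u$. At height $s'$, $u$ signed $\sigma'$, so $u$'s executed root at $s'$ is $\sigma'$. At height $s$, $u$ signed $\sigma$, which was computed from that same prefix. Hence the history committed by $\sigma$ passes through $(h',\sigma')$. This gives the result directly, and Step~2 becomes: take $u$ in the intersection and use in-order execution. One subtlety remains, namely whether an honest $u$ might have signed $C'$ at $s'$ and then executed something different at $s'$. This is excluded by the Step~2 guard: it computes $\textsc{Apply}$ once and broadcasts at most one \textsc{ChkProp} per height, and that computation is the replica's execution.

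\textbf{Step 3: single chain.} Any two recorded \checkpointqcs at heights $s'<s$ are now prefix-consistent. At equal heights they coincide by \cref{lem:unique-checkpointqc}. So all recorded \checkpointqcs are totally ordered by height and lie on one chain.

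\textbf{Main obstacle.} I expect the main obstacle to be justifying that an honest replica's \textsc{ChkProp} at $s'$ reflects the same execution it later extends at $s$. Replicas might apply a \voteqc at $s'$ and at $s$ observed in different views or forks, and might never have proposed at $s'$. Two points need care here. The first is where the hypothesis that $C'$ is recorded is needed; I would use it if required to argue that honest replicas' prefixes are anchored at checkpointed state (``execute slots in order from the same checkpointed prefix''). The second is handling honest replicas in the intersection that skipped proposing at $s'$; the intersection argument already avoids these by picking $u$ among signers of both certificates.
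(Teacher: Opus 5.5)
There is a genuine gap: the argument never handles a fork recovery between the two signatures, and that is the only place the \emph{recorded} hypothesis is needed.

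Your intersection argument is the paper's argument within a single epoch. The signer sets of $C$ and $C'$ meet in at least $2q-n = n-2f-2c > f+f_{abc}$ replicas, so some non-equivocating $R$ signed both. Because $R$ executes each height once on an append-only history, the root $\sigma$ it signed at $s$ commits to the $(h',\sigma')$ it signed at $s'$.

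The gap is your claim that $R$'s history at $s'$ is still $(h',\sigma')$ when it signs $C$ (``that computation is the replica's execution''). This fails if a fork recovery (\cref{alg:fork_recovery}) happens between the two signatures. Recovery replaces $R$'s local history with the canonical branch $\mathcal{O}_{canon}$. That branch is built from the highest \emph{delivered} \checkpointqc, extended by lowest-digest \voteqcs, so it need not agree with $R$'s old content at $s'$. The Step~2 guard does not help, since the single-\textsc{ChkProp} lock is per epoch.

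A quick sanity check shows something must be missing. Your argument never uses that $C'$ is recorded, so it would prove the statement for every $C'$. Yet the paper notes that an unrecorded \checkpointqc, assembled privately from honest \textsc{ChkProp}s, can be orphaned by recovery. Your ``main obstacle'' paragraph senses this, but you then conclude that the intersection argument avoids it, and it does not.

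The missing step is an induction over recovery epochs. Because $C'$ is recorded, some correct replica stores it from formation and inputs it to every later Byzantine broadcast. So $C'$ is delivered in each recovery after its formation, and that recovery's $C_{max}$ satisfies $C_{max}.s \ge s'$. Now use strong induction on the number of recovery epochs preceding the formation of $C$:
\begin{itemize}
  \item If $C_{max}.s > s'$, apply the lemma to the pair $(C', C_{max})$, whose formation is preceded by strictly fewer recoveries.
  \item If $C_{max}.s = s'$, use \cref{lem:unique-checkpointqc}.
\end{itemize}
Either way $C_{max}$'s history contains $(h',\sigma')$ at $s'$. Hence $\mathcal{O}_{canon}$, and every honest history after it, retains that content. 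Only then does your intersection argument carry over to a $C$ formed after one or more recoveries. Your Step~3 is fine once this is in place.
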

\begin{proof}
    The signer sets of $C$ and $C'$ each have size $q = n-f-c$ and intersect in at least $2q - n = n - 2f - 2c$ replicas. Since $f + f_{abc} < n - 2f - 2c$, at least one common signer $R$ is non-equivocating ($R$ may be crash-faulty, but was live when signing both certificates). Non-equivocating replicas execute slots in order on an append-only local history, which changes only when fork recovery adopts the canonical branch (\cref{alg:fork_recovery}); within normal operation, \cref{lem:view-change-invariance} pins any slot with a \voteqc to its digest across views, and $R$ halts upon detecting a fork. Between recovery epochs, therefore, $R$'s history at height $s'$ is fixed at the content $(h', \sigma')$ it checkpoint-proposed when signing $C'$. Across recovery epochs the content is preserved because $C'$ is recorded: its correct holder inputs it to the Byzantine broadcast, so $C'$ is delivered in every recovery epoch after its formation. By strong induction on the number of recovery epochs preceding the formation of $C$, the highest delivered \checkpointqc $C_{max}$ of each such epoch satisfies $C_{max}.s \ge s'$ and its history contains $(h', \sigma')$ at height $s'$ (apply the lemma inductively to the pair $(C', C_{max})$, whose formation is preceded by strictly fewer recovery epochs); hence the adopted canonical branch---and every honest history thereafter---retains $(h', \sigma')$ at height $s'$. Since $R$ signed $C$, the root $\sigma$ certified by $C$ was computed on $R$'s history and therefore commits $(h', \sigma')$ at height $s'$. By collision resistance of the state commitment, every history consistent with $C$ agrees with $C'$ at height $s'$.
\end{proof}

\begin{theorem}[Resilient Path Safety]
    \label{thm:resilient-safety}
    If clients follow the Resilient Path (finalizing only upon seeing a \finalityqc), the states they finalize form a single prefix-consistent history---no two clients commit conflicting states at any height---even across view changes, synchronous recoveries, and post-recovery epochs, provided $f_{abc} < n - 3f - 2c$.
\end{theorem}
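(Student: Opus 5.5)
The plan is to reduce Resilient Path safety to \cref{lem:unique-checkpointqc,lem:checkpoint-chain}. The bridge is that every \finalityqc is backed by a \emph{recorded} \checkpointqc over the same content. A client finalizes $(s,h,\sigma)$ only on seeing a \finalityqc, i.e.\ $q$ matching \textsc{ChkWitness} messages for $(s,h,\sigma)$. Its signer set has size $q=n-f-c$, and at most $f+f_{abc}<n-2f-2c\le q$ of those signers can deviate, so at least one signer is non-equivocating. That signer issued its witness under Step~3 (\cref{step:3}), which means it observed a \checkpointqc for $(s,h,\sigma)$ and stores it. Hence a recorded \checkpointqc $C$ with content $(s,h,\sigma)$ exists. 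Within an epoch, I will also use the post-recovery-epoch rule that correct replicas never sign artifacts of a closed epoch. This ensures the witness and the certificate belong to the same epoch as the one the client relies on.

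Take two clients that finalize $(s_1,h_1,\sigma_1)$ and $(s_2,h_2,\sigma_2)$, and let $C_1,C_2$ be the corresponding recorded \checkpointqcs. There are two cases.
\begin{itemize}
\item If $s_1=s_2$ (in the same epoch), \cref{lem:unique-checkpointqc} gives that $C_1$ and $C_2$ certify the same content, so $h_1=h_2$ and $\sigma_1=\sigma_2$.
\item If $s_1<s_2$, \cref{lem:checkpoint-chain} applied to the recorded $C_1$ and to $C_2$ shows that the history committed by $\sigma_2$ contains $(h_1,\sigma_1)$ at height $s_1$.
\end{itemize}
Because state roots bind history ($\sigma_s=H(\sigma_{s-1}\Vert h)$) and the hash is collision-resistant, $\sigma_2$ then determines every lower height consistently with $C_1$. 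So the two finalized states are prefix-consistent, and transitively all finalized states lie on one chain.

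Across epochs I will order certificates lexicographically by (epoch, height). The recovery argument already embedded in \cref{lem:checkpoint-chain} covers this case: every recorded checkpoint is delivered by the Byzantine broadcast (Dolev--Strong guarantees identical delivered sets at all correct replicas, with no honest-majority requirement). The adopted $C_{max}$ is unique per height and extends every recorded checkpoint. So the new genesis commits to the full finalized history, and each later epoch's roots chain back through that genesis. Any \checkpointqc that was not recorded (assembled privately) carries no \finalityqc and is therefore irrelevant to clients. View changes are harmless, since checkpoint artifacts are view-free and the lemmas never depend on views.

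The main obstacle is the first step, namely that a \finalityqc really implies a recorded \checkpointqc of the same epoch and content. It needs $q>f+f_{abc}$ and depends on crash-faulty witnesses having stored the certificate while they were live. I will argue this by noting that ``recorded'' only requires storage at the moment of formation, together with the rule that a correct holder inputs it to broadcast. I will make the crash-faulty case precise by selecting a signer that is correct, not merely non-equivocating. Such a signer exists because $q-(f+f_{abc}+c)\ge q-(n-2f-2c-1)-c=f+1>0$.
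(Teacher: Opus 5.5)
Your proposal is correct and follows essentially the same route as the paper. Both proofs bridge from a \finalityqc to a recorded \checkpointqc through the $q-f-f_{abc}-c\ge f+1$ correct witnesses, use \cref{lem:unique-checkpointqc} for equal heights, and rely on \cref{lem:checkpoint-chain} plus Byzantine-broadcast delivery of every recorded certificate for recoveries and later epochs; the only difference is that you apply \cref{lem:checkpoint-chain} directly to two finalized checkpoints at different heights, which the paper leaves implicit by invoking it only for $C_{max}$.
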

\begin{proof}
    A client finalizes a state only if it observes a \finalityqc, which requires an underlying valid \checkpointqc. By \cref{lem:unique-checkpointqc}, all \checkpointqcs at height $s$ certify the same content $(h,\sigma)$. Since honest replicas only broadcast a \textsc{ChkWitness} matching this uniquely determined checkpoint content (and only if its payload executes to the correct state), no \finalityqc can be formed for a conflicting state. Moreover, every checkpoint a client finalizes is recorded: a \finalityqc carries $q$ \textsc{ChkWitness} signatures, of which at least $q - f - f_{abc} - c \ge f+1$ belong to correct replicas, and a correct replica issues a \textsc{ChkWitness} only after observing---and storing---the underlying \checkpointqc (Step~3, \cref{step:3}).

    During Fork Recovery (\cref{alg:fork_recovery}), replicas exchange all known \checkpointqcs via a Byzantine Broadcast protocol (e.g., Dolev-Strong). Because Dolev-Strong does not rely on an honest majority, it guarantees identical outputs for all correct nodes despite a potentially dishonest majority ($f+f_{abc} > n/2$). Alternatively, if $f+f_{abc} < n/2$, an honest-majority broadcast can be deployed to terminate much faster (e.g., in expected constant rounds~\cite{Katz2006Expected,Abraham2020Sync}). In either case, all correct replicas will obtain the exact same set $\mathcal{C}$ of \checkpointqcs, and $\mathcal{C}$ contains every recorded \checkpointqc---in particular every client-finalized one. By \cref{lem:unique-checkpointqc} at most one \checkpointqc exists per height, and by \cref{lem:checkpoint-chain} the \checkpointqc with the highest slot $C_{max}$ extends the content certified by every recorded \checkpointqc at lower heights. All correct replicas deterministically adopt the branch of $C_{max}$, which therefore preserves any state finalized by a Resilient Path client prior to the fork.
\end{proof}
\section{\orcdag: A DAG-Based Implementation}
\label{sec:dag-application}

\Cref{sec:model,sec:proofs} are independent of how
proposals are represented: safety uses first-round votes and the
thresholds $q$ and $k$ of \cref{thm:main}. The
\sysname protocol (\Cref{sec:protocol}) uses $q$ for every
certificate round. Many
high-throughput systems nonetheless realize the same logic over a
\emph{directed acyclic graph} of blocks or vertices, with edges for
dependencies and leaders extending a frontier of the DAG~\cite{narwhal,bullshark,mysticeti,mahi-mahi,sailfish,shoal,hammerhead}. We present \orcdag, a DAG-based instantiation of \sysname.

\begin{algorithm*}[t]
    \caption{\orcdag Instantiation (for leader block $B$ at round $R$)}
    \label{alg:dag}
    \algsize
    \begin{algorithmic}[1]
        \State \textbf{Definitions:}
        \Statex \indent \textbf{Support}: A block $b'$ \emph{supports} $B \equiv (A,R,h)$ if $B$ is the first block by author $A$ in round $R$ encountered in the deterministic depth-first traversal of $b'$'s causal history~\cite{mysticeti}.
        \Statex \indent \textbf{Vote}: A block in round $R+1$ that supports $B$.
        \Statex \indent \textbf{Blame}: A block in round $R+1$ that does \emph{not} support $B$.
        \Statex
        \State \textbf{Direct Decision Rule} (evaluated when round $R+1$ blocks are delivered):
        \State \textbf{if} blocks from $\geq q$ distinct authors in round $R+1$ vote for $B$ \textbf{then} \textsf{Commit}($B$)
        \State \textbf{if} blocks from $\geq q$ distinct authors in round $R+1$ blame $B$ \textbf{then} \textsf{Skip}($B$)
        \Statex
        \State \textbf{Indirect Decision Rule} (evaluated when anchor $A$ is committed at round $\geq R+2$):
        \State \textbf{if} $B$ is undecided \textbf{then}
        \State \indent \textbf{if} $A$ links to blocks from $\geq k$ distinct authors in round $R+1$ that vote for $B$ \textbf{then}
        \State \indent\indent \textsf{Commit}($B$)
        \State \indent \textbf{else}
        \State \indent\indent \textsf{Skip}($B$)
    \end{algorithmic}
\end{algorithm*}

A DAG vertex (block) plays the role of the primary's proposal for a slot; a replica issues \textsf{accept}/\textsf{reject} once its local validity rules (including dependency checks) hold. The key idea of \orcdag is that a vote is not a separate signed message but a causal edge: a round $R+1$ block votes for a block $B$ precisely when it \emph{supports} $B$ in the sense of Mysticeti~\cite{mysticeti}---$B$ is the first block by its author for round $R$ encountered in a deterministic depth-first traversal of the voter's causal history---and blames $B$ when it does not (the Vote and Blame of \cref{alg:dag}). Support matters because causal histories merge: a voter's history may contain several equivocating copies of a leader block, but support selects at most one block per author-round slot. Hence a non-equivocating author casts exactly one vote per slot, and every \orcdag execution maps to an execution of \sysname, so the safety results of \Cref{sec:proofs} apply verbatim. The DAG structure itself carries the first-round votes, so no explicit vote messages are needed even though the DAG is uncertified. The Direct Decision Rule of \cref{alg:dag} realizes the Fast-Path: collecting $q$ agreeing first-round votes is exactly $\geq q$ distinct round $R+1$ authors voting for $B$, using the same quorum $q = n-f-c$ from \cref{thm:main}; symmetrically, $q$ blames force a \textsf{Skip}.

The Indirect Decision Rule of \cref{alg:dag} realizes the View-Change. The ``later authenticated artifact'' of the abstract protocol \sysname is concretely a committed anchor $A$ at a round $\geq R+2$: because reachability in the DAG certifies the votes enclosed in its causal past, $A$ linking to $\geq k$ distinct round $R+1$ authors that vote for $B$ plays the role of the embedded vote reports in the view change of \Cref{sec:protocol}, with the accept threshold $k = 2f+c+1$. This single anchor object subsumes the abstract alternatives---such as a commit certificate on a descendant, a checkpoint quorum, or a bundle hashing prior votes---as all collapse to the anchor's causal history, which reachability certifies without any separate certificate. The choice of $k$ is exactly what the Reachability, Exclusivity, and Liveness conditions of \cref{sec:proofs} require for safe indirect commits.

\Cref{sec:algorithms} provides detailed algorithms to formally define \orcdag.


\section{Evaluation} \label{sec:evaluation}

\orcdag is a DAG-based instantiation of \sysname{} that we implement and benchmark, built in Rust as a fork of Mysticeti~\cite{mysticeti-code}. \Cref{sec:implementation} details its implementation and testing methodology.

\subsection{Evaluation scope}

Our evaluation has a single goal: quantify the latency we gain by trading away fault tolerance, i.e., by moving from the optimal $n=3f+1$ three-message-delay fault budget to \sysname's $n=5f+3c+1$ two-message-delay design space. Put differently: \emph{what latency do we gain by giving up the fault tolerance represented by the orange wedge of \Cref{fig:fault-coverage} (\Cref{sec:intro})?} This is not trivial. A naive observer might expect a direct 33\% latency reduction simply from cutting one of the three commit rounds. But two factors complicate the picture: (1) \sysname needs a larger quorum of replicas per round than a typical $n=3f+1$ protocol; and (2) the saved round only reduces the latency from block proposal to commit), not \emph{queuing latency} (from transaction submission to block proposal).

Our evaluation makes the following claims:
\begin{itemize}
    \claimitem{claim:c1} \orcdag's lower latency is bought with fault tolerance alone, not throughput.
    \claimitem{claim:c2} Under benign crash faults, \orcdag degrades gracefully, retaining its throughput and low latency.
    \claimitem{claim:c3} \orcdag's ordering-latency reduction over the $n=3f+1$ baseline is ${\sim}24\%$ at low load and erodes only mildly (to ${\sim}22\%$) at high load.
    \claimitem{claim:c4} \orcdag's ordering latency is independent of how the fault budget splits between $f$ and $c$ in a typical geo-distribution, and a split budget deploys at strictly smaller $n$ than the pure-Byzantine $5f+1$ ($c=0$) point.
    \claimitem{claim:c5} Geo-distribution is decisive: \orcdag's latency advantage holds when all protocols operate over the same geo-locations, but is negated when the baseline's smaller quorum can exclude a remote region that \orcdag's larger quorum cannot.
    \claimitem{claim:c6} The checkpoint engine is virtually overhead-free: enabling it changes neither \orcdag's throughput nor its ordering latency.
    \claimitem{claim:c7} End-to-end latency is ${\sim}2\times$ the ordering latency for \orcdag and the baseline alike, and \orcdag's end-to-end latency is ${\sim}20$--$25\%$ below the baseline's across all loads and $(f,c)$ splits.
\end{itemize}

Benchmarking BFT protocols under actual Byzantine behavior is an open problem~\cite{twins,dos-resilience}; the state of the art establishes worst-case guarantees through formal proofs, which we give in \Cref{sec:proofs}.

\subsection{Experimental setup}

To demonstrate these claims, we deploy four \orcdag configurations spanning the fault-budget spectrum (\Cref{fig:param-coverage}): \emph{Byzantine-only} ($f{=}10,c{=}0$), \emph{Byzantine-heavy} ($f{=}8,c{=}3$), \emph{Balanced} ($f{=}6,c{=}6$), and \emph{Crash-heavy} ($f{=}2,c{=}13$).
We use two baselines:
\emph{(i)} Mysticeti~\cite{mysticeti}, the closest $n=3f+1$ protocol to \orcdag in terms of both design and implementation, representing the standard 3-round PBFT-style baseline;
\emph{(ii)} Hydrangea~\cite{Hydrangea}, an alternative hybrid $(f,c)$ design with $n=3f+2c+k+1$ that can commit in a single round-trip (via an optimistic path) and is, to our knowledge, the only other protocol in this design space with a deployable implementation. The pure-Byzantine $c=0$ point (an $n=5f+1$ deployment) is itself an \orcdag configuration, which we include as an additional reference rather than as a separate system.

\begin{figure}[t]
    \centering
    \includegraphics[width=\linewidth]{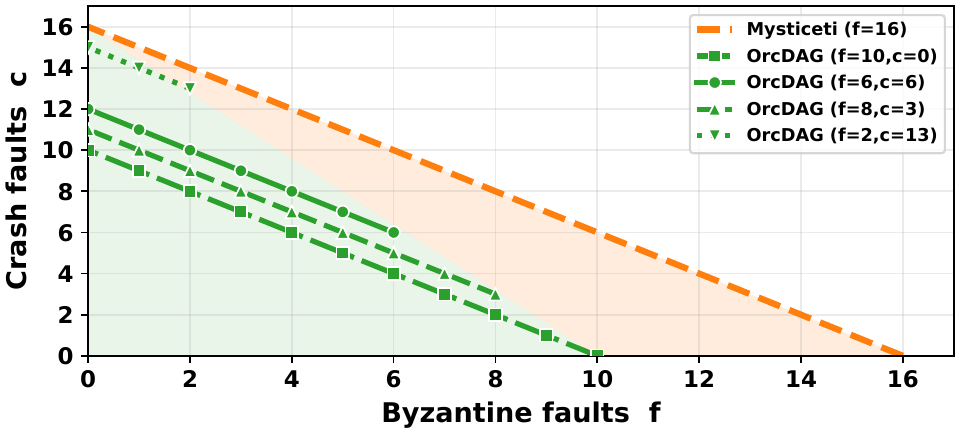}
    \caption{\footnotesize Runtime fault tolerance of the benchmarked configurations.}
    \label{fig:param-coverage}
\end{figure}

Several recent protocols explore designs in the same neighborhood (\Cref{sec:related}), including Minimmit~\cite{Minimmit}, Kudzu~\cite{Kudzu}, and Alpenglow~\cite{Alpenglow}. None of these provide deployable implementations with networking code; their codebases are intended for simulation only and are therefore excluded from our WAN measurements. The remaining systems, Mysticeti and Hydrangea, both have mature deployable implementations and serve as our baselines. We configure Hydrangea at $f{=}9$, $c{=}10$, $k{=}2$ for the $n{=}50$ committee; in this regime it tolerates more crash faults than \orcdag, but, being built atop HotStuff, it inherits the data-dissemination bottleneck identified by Narwhal~\cite{narwhal}.

Unless stated otherwise, we emulate a typical blockchain replica distribution~\cite{solana-decentralization,suiscan}: a fast quorum spanning several EU and US regions, plus a remote tail (Tokyo).
We report median (p50) latencies with p90 whiskers. \Cref{sec:experimental-setup} describes the precise geo-distribution and testbed details used in this section.
In all graphs, \emph{ordering latency} refers to the time elapsed from the moment a client submits a transaction to when it is ordered by consensus (including queuing); \emph{checkpoint latency} refers to the additional time until the transaction is covered by a \emph{certified checkpoint}, that is, a commitment to the executed state attested by a quorum (\cref{sec:implementation}); \emph{end-to-end latency} is their sum; and \emph{throughput} refers to the number of (512 bytes) transactions processed per second.

\subsection{Throughput vs.\ fault tolerance trade-off}

\Cref{fig:happy-case} evaluates a roughly $50$-replica WAN deployment under failure-free conditions. The throughput of \orcdag matches Mysticeti, which is expected as both build upon the same uncertified-DAG fabric, corroborating related work~\cite{narwhal}. For cost reasons, we cap the input load at $100{,}000$\,tx/s, which is two orders of magnitude above the peak throughput observed in deployed blockchains and ample to stress the systems~\cite{sui-lutris}. This confirms claim~\ref{claim:c1}: \orcdag is realizable without sacrificing throughput.
Hydrangea performs significantly worse at scale (red line in \cref{fig:happy-case}). At $50$ replicas, its latency exceeds the plotted range even under minimal load, demonstrating that it does not scale gracefully to large committees. This limitation stems from its leader-driven data-dissemination mechanism, consistent with findings reported in related work~\cite{narwhal}.

The figure also shows that \orcdag orders at lower latency than Mysticeti; we dissect this latency--fault-tolerance trade-off in \Cref{sec:latency-tradeoff}.

\begin{figure}[t]
    \centering
    \includegraphics[width=\linewidth]{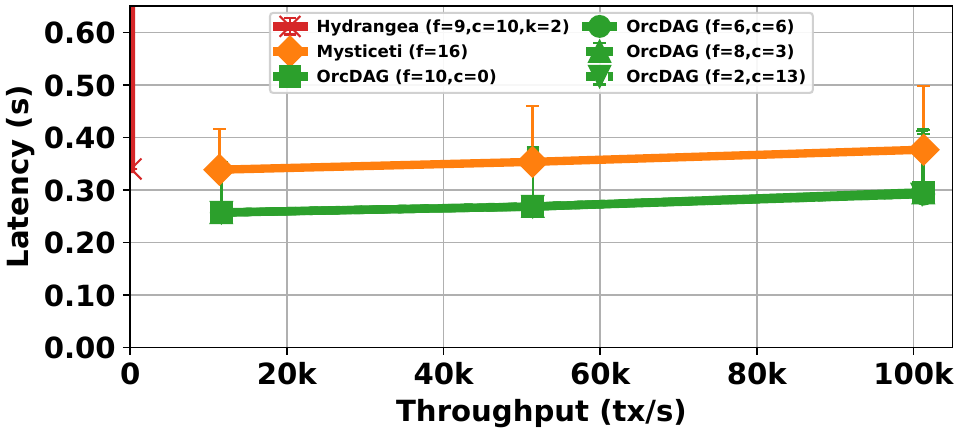}
    \caption{\footnotesize Throughput vs ordering latency, no faults, $50$ replicas.}
    \label{fig:happy-case}
\end{figure}

\subsection{Impact of benign crash faults}

\Cref{fig:faults} shows \orcdag, Mysticeti, and Hydrangea under benign crash faults in committees of approximately $10$ replicas. The DAG-based systems sustain the offered load with a graceful latency inflation relative to the fault-free runs, absorbing crashes by quickly skipping crashed leaders via the direct skip rule (\Cref{sec:dag-application}). Each system runs at the minimal committee for this fault budget---Mysticeti at $n{=}7$ ($f{=}2$), \orcdag at $n{=}9$ ($f{=}1,c{=}1$), and the $c{=}0$ reference at $n{=}11$ ($f{=}2$)---so every committee operates at its maximum fault load. In these benign fault conditions, \orcdag matches its own $c=0$ point because crashed replicas count against $c$ rather than $f$ and so do not consume the Byzantine budget (the $n{=}9$ hybrid and the $n{=}11$ $c{=}0$ reference commit within $1$\,ms of each other: $378$ versus $379$\,ms at low load, $395$ versus $394$\,ms at high load). This split remains latency-neutral while delivering a latency win over Mysticeti: at low load ($10{,}000$\,tx/s) \orcdag commits in $378$\,ms versus $492$\,ms (a ${\sim}23\%$ reduction), and near the high-load cap ($50{,}000$\,tx/s) $395$\,ms versus $567$\,ms (a ${\sim}30\%$ reduction). This confirms claim~\ref{claim:c2}.
Hydrangea's latency exceeds the plotted range, confirming it does not tolerate faults gracefully, corroborating related work~\cite{narwhal}. These small committees represent the worst case for the fault budget: a crash removes a large fraction of a $10$-replica committee but a negligible fraction of a $50$-replica one.

\begin{figure}[t]
    \centering
    \includegraphics[width=\linewidth]{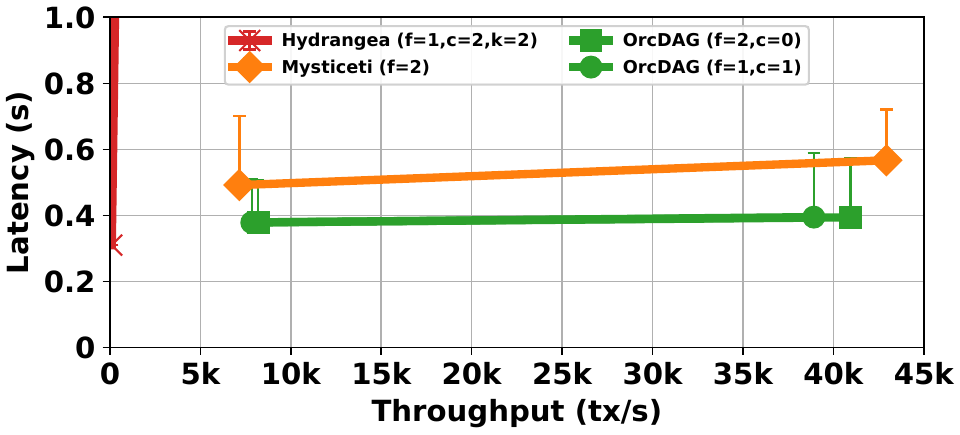}
    \caption{\footnotesize Throughput vs ordering latency, $2$ crash faults, minimal committees.}
    \label{fig:faults}
\end{figure}

\subsection{Understanding the latency trade-off}\label{sec:latency-tradeoff}

We zoom into \Cref{fig:happy-case} to understand \orcdag's ordering-latency--fault-tolerance trade-off. At $100$k\,tx/s, \orcdag orders in $292$--$296$\,ms (p50) versus $377$\,ms for Mysticeti, a ${\sim}22\%$ reduction; at low load ($10$k\,tx/s), the gap is $257$\,ms versus $339$\,ms, a ${\sim}24\%$ reduction. All \orcdag configurations, including the $c{=}0$ ($5f{+}1$) reference, agree within a few milliseconds at both low and high load. A naive view expects the move from three to two message delays to yield a ${\sim}33\%$ reduction in ordering latency. The measured ${\sim}24\%$ reduction is explained by the shared, load-independent base: queuing latency's constant component---since replicas propose a block every round, a transaction waits on average half a round for inclusion---is identical for both systems, meaning the saved round represents a smaller fraction of the total. Moreover, the quorum-width difference (${\sim}80\%$ of replicas for \orcdag versus ${\sim}67\%$ for Mysticeti) costs little in this geo-distribution because both quorums form within the same set of regions (see \Cref{sec:quorum-location}).

\Cref{fig:bars-10k} reports per-protocol ordering latency at low load ($10$k\,tx/s), which isolates the quorum-width effect because load-induced queuing is negligible. Here, the ${\sim}24\%$ reduction holds uniformly across all \orcdag configurations.
\Cref{fig:bars-100k} adds high load ($100$k\,tx/s), introducing load-induced queuing. This steady-state effect (latency stays flat rather than accumulating) arises because larger blocks prolong transmission and higher per-node jitter inflates the wait for the last quorum member. This symmetric queuing adds $+38$\,ms for Mysticeti and $+35$--$39$\,ms for the \orcdag configurations. Because additive queuing is common while base ordering latencies differ, the relative reduction erodes mildly, from ${\sim}24\%$ to ${\sim}22\%$ as load grows from $10$k to $100$k\,tx/s, similarly across all $(f,c)$ splits. \Cref{fig:improvement} plots the total ordering-latency reduction over Mysticeti against load; its near load-invariance confirms claim~\ref{claim:c3}.

\begin{figure}[t]
    \centering
    \includegraphics[width=\linewidth]{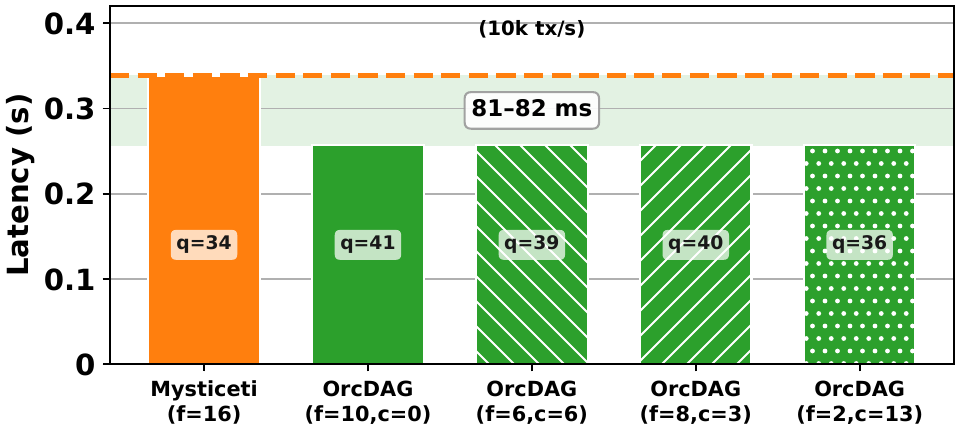}
    \caption{\footnotesize Per-protocol ordering latency at low load ($10$k\,tx/s); load-induced queuing negligible.}
    \label{fig:bars-10k}
\end{figure}

\begin{figure}[t]
    \centering
    \includegraphics[width=\linewidth]{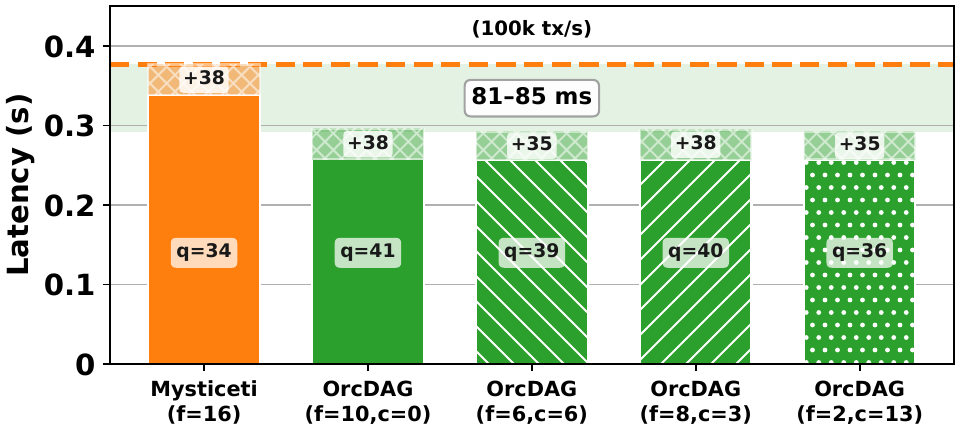}
    \caption{\footnotesize Per-protocol ordering latency at high load ($100$k\,tx/s); load-induced queuing segment stacked.}
    \label{fig:bars-100k}
\end{figure}

\begin{figure}[t]
    \centering
    \includegraphics[width=\linewidth]{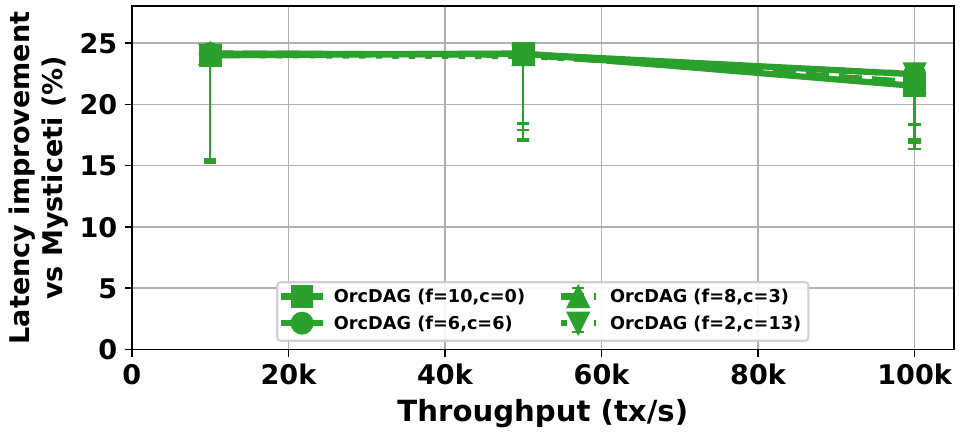}
    \caption{\footnotesize Ordering latency improvement over Mysticeti vs load.}
    \label{fig:improvement}
\end{figure}

\subsection{The role of quorum location} \label{sec:quorum-location}

\Cref{fig:bars-10k,fig:bars-100k} show that, across \orcdag configurations, ordering latency is not ordered by quorum size. All configurations order in two message delays over the same uncertified-DAG structure, so in a fault-free run their ordering latencies are effectively indistinguishable regardless of how the fault budget splits between $f$ and $c$. This confirms claim~\ref{claim:c4}: in the common case \orcdag pays no performance penalty for the hybrid-fault analysis, and a split budget deploys at strictly smaller $n$ than the pure-Byzantine $5f+1$ ($c=0$) point. Concretely, the configurations cluster at $257$\,ms (within $1$\,ms) at low load and $292$--$296$\,ms at high load, versus $339$ and $377$\,ms for Mysticeti; the $c=0$ (Byzantine-only) point is among the fastest, so the $0$--$4$\,ms spread is run-to-run measurement noise (WAN and egress jitter). Crucially, this neutrality is conditional. The split sets the quorum $q = n - f - c$; the Byzantine-leaning configuration carries a larger quorum ($q{=}41$ for $c{=}0$) than the crash-leaning one ($q{=}36$ for Crash-heavy). The configurations are latency-neutral only because, in this graceful geo-distribution, all of these quorums still form in the EU-US without reaching the remote tail in Tokyo.

\begin{figure}[t]
    \centering
    \includegraphics[width=\linewidth]{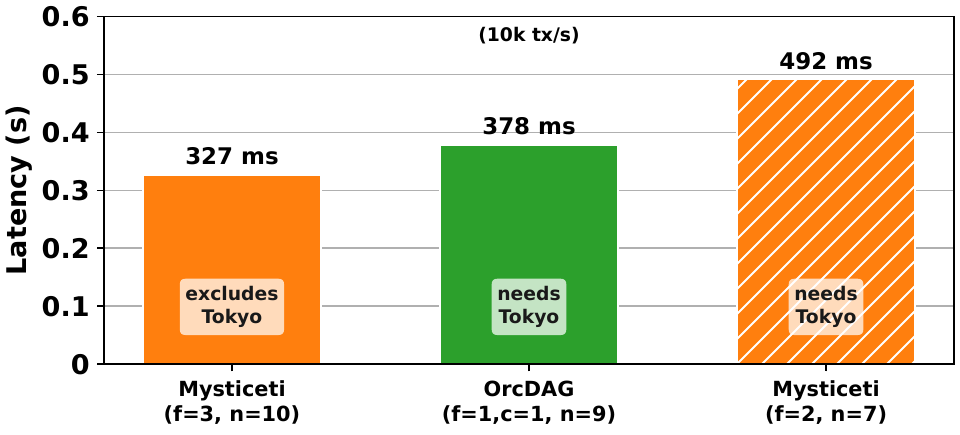}
    \caption{\footnotesize Latency when the quorum is forced to include the remote tail.}
    \label{fig:geo-boundary}
\end{figure}

\Cref{fig:geo-boundary} shows what happens when the quorum is forced to reach beyond the EU-US. We use crash faults on minimal committees ($2$ crashes) to force the quorum to include the remote tail (Tokyo). When every protocol must reach Tokyo, Mysticeti is penalized more because its extra (third) round waits on the remote replica once more than \orcdag. But when Mysticeti's larger fault-tolerance budget gives it enough slack to exclude Tokyo while \orcdag's tighter quorum cannot, the advantage reverses. Concretely, Mysticeti with $f{=}3$ ($n=10$) keeps slack and excludes Tokyo ($327$\,ms), whereas \orcdag with $f{=}1,c{=}1$ ($n=9$) and Mysticeti with $f{=}2$ ($n=7$) have no slack and must include Tokyo ($378$\,ms and $492$\,ms). So \orcdag beats the minimal Mysticeti (two rounds versus three) but loses to the over-provisioned Mysticeti that dodges Tokyo. This confirms claim~\ref{claim:c5}. The same mechanism bounds claim~\ref{claim:c4}: under a less graceful geo-distribution, a Byzantine-leaning configuration's larger quorum ($q{=}41$) could be forced to include the remote tail while a crash-leaning one ($q{=}36$) avoids it, at which point the split would no longer be latency-neutral.

\subsection{The cost of checkpoints}\label{sec:eval-checkpoints}

\begin{figure}[t]
    \centering
    \includegraphics[width=\linewidth]{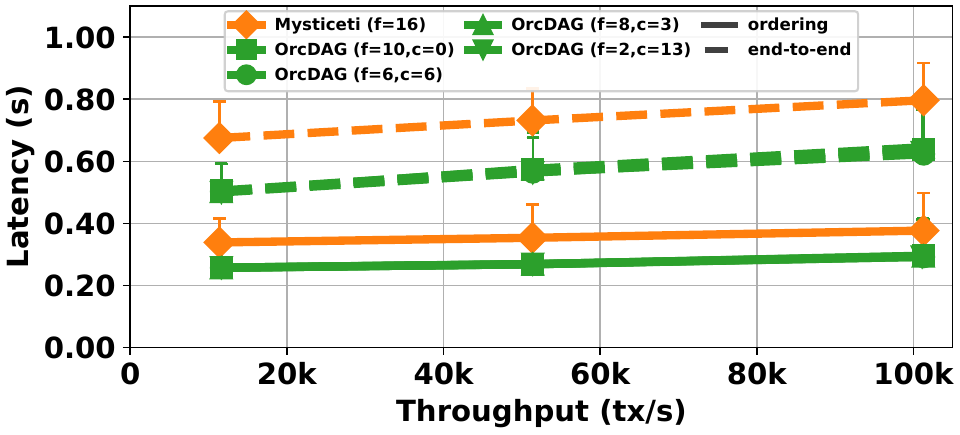}
    \caption{\footnotesize Ordering (solid) and end-to-end (dashed) latency, no faults, ${\sim}50$ replicas.}
    \label{fig:ckpt-happy-case}
\end{figure}

\begin{figure}[t]
    \centering
    \includegraphics[width=\linewidth]{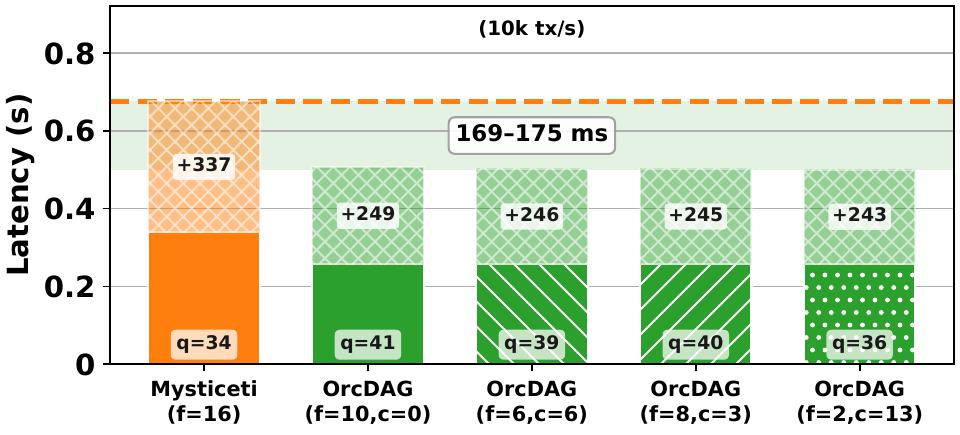}
    \caption{\footnotesize Latency breakdown at low load ($10$k\,tx/s): ordering (solid) plus checkpoint (hatched) latency.}
    \label{fig:ckpt-bars-10k}
\end{figure}

\begin{figure}[t]
    \centering
    \includegraphics[width=\linewidth]{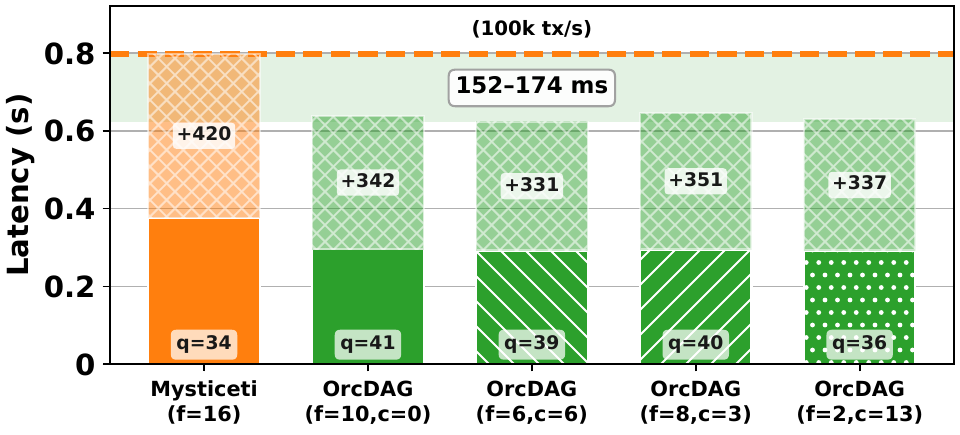}
    \caption{\footnotesize Latency breakdown at low load ($100$k\,tx/s): ordering (solid) plus checkpoint (hatched) latency.}
    \label{fig:ckpt-bars-100k}
\end{figure}

\begin{table}[t]
    \caption{Ordering latency (p50\,/\,p90, ms) with the checkpoint engine enabled and disabled (\orcdag, $f{=}6,c{=}6$).}
    \label{tab:ckpt-onoff}
    \centering\small
    \begin{tabular}{@{}lcc@{}}
        \toprule
        \textbf{Load} & \textbf{Engine enabled} & \textbf{Engine disabled} \\
        \midrule
        $10$k\,tx/s   & $257$ / $353$           & $253$ / $350$            \\
        $100$k\,tx/s  & $292$ / $407$           & $290$ / $404$            \\
        \bottomrule
    \end{tabular}
\end{table}

We implement the checkpoint engine (\cref{sec:implementation}) emulating the design of production blockchains~\cite{sui-code,aptos-code}. It finalizes checkpoints by ordering replica attestations through consensus itself. That is, each replica builds an \emph{attestation}---a small transaction containing a commitment to its executed state---and submits it to the consensus protocol like any other transaction. A checkpoint is \emph{certified} once a quorum of matching attestations is ordered.
For \orcdag, a certified checkpoint carries the Resilient-Path guarantee (\finalityqc, \cref{sec:protocol}); for the baseline, it provides durability only. We first establish that running the engine affects neither ordering latency nor throughput. As shown in \cref{tab:ckpt-onoff}, with the checkpoint engine disabled, \orcdag ($f{=}6,c{=}6$) orders in $253$\,ms p50 ($350$\,ms p90) at $10$k\,tx/s and $290$\,ms p50 ($404$\,ms p90) at $100$k\,tx/s. Enabling the engine yields ordering latencies of $257$\,ms p50 ($353$\,ms p90) and $292$\,ms p50 ($407$\,ms p90), respectively. The minor $2$--$4$\,ms differences are well within standard run-to-run WAN and egress jitter (${\sim}10$--$15$\,ms). Throughput remains unchanged as both configurations absorb the full offered load. The stream of attestation transactions adds only ${\sim}1{,}300$ observations/s ($25$ checkpoints per second $\times$ $51$ replicas), representing a negligible ${\sim}1.3\%$ traffic overhead over the main transaction stream at $100$k\,tx/s (and ${\sim}14\%$ at $10$k\,tx/s, still with no measurable impact on latency). This confirms claim~\ref{claim:c6}.

As shown in \Cref{fig:ckpt-happy-case}, end-to-end latency runs parallel to ordering latency for both systems, costing roughly one additional ordering cycle (end-to-end latency is ${\sim}2.0$--$2.2\times$ the ordering latency). This is because certification requires a quorum of attestations to be ordered, effectively constituting a second ordering cycle. Two structural factors cause checkpoint latency to slightly exceed a single median ordering cycle: first, a checkpoint is certified only when the $q$-th attestation is ordered (which tracks a high order statistic of the attestation-ordering distribution rather than the median); second, transactions must wait for the next periodic checkpoint boundary (up to ${\sim}40$\,ms). State execution and attestation formation contribute a further negligible $1$--$8$\,ms p50. \Cref{fig:ckpt-bars-10k,fig:ckpt-bars-100k} show that checkpoint latency is $243$--$249$\,ms at $10$k\,tx/s and $331$--$351$\,ms at $100$k\,tx/s for \orcdag, compared to $337$\,ms and $420$\,ms for Mysticeti. Because \orcdag saves one message delay in both the ordering of transactions and the ordering of attestations, its latency advantage compounds: its end-to-end latency is $169$--$175$\,ms (${\sim}25\%$) below Mysticeti's at low load, and $152$--$174$\,ms (${\sim}20\%$) below at high load. The latency-neutrality of the $(f,c)$ split (claim~\ref{claim:c4}) also extends to end-to-end latency, with all four configurations agreeing within a few milliseconds at all load points. Strikingly, \orcdag's end-to-end latency at $100$k\,tx/s ($623$--$645$\,ms) is lower than Mysticeti's at $10$k\,tx/s ($675$\,ms). This confirms claim~\ref{claim:c7}. We note that because our implementation certifies checkpoints by ordering attestations through consensus, the four-delay bound of \cref{sec:protocol} is a lower bound that a DAG-native implementation could approach by certifying attestations once they are causally covered by a quorum of later blocks, rather than waiting for their commit.

\section{Related Work}
\label{sec:related}

\paragraph*{Hybrid fault models.}
The formal separation of Byzantine and benign faults was first explored by Thambidurai and Park~\cite{Thambidurai1988} for interactive consistency. While their work demonstrated that protocols can achieve higher resilience by not treating all failures as worst-case Byzantine, it did not address the latency limitations of state machine replication (SMR). Later pragmatic systems like UpRight~\cite{Clement2009UpRight} and XFT~\cite{Liu2016XFT} applied this separation to cluster architectures. However, UpRight focuses on end-to-end service robustness rather than theoretical latency minimums, and XFT opportunistically tolerates Byzantine faults only when an honest majority communicates synchronously. In contrast, our work fundamentally re-examines the quorum intersections of SMR under partial synchrony, specifically optimizing for the minimal 2-message-delay commit path while treating $f$ and $c$ as distinct variables.

\paragraph*{Low-latency SMR and $5f{+}1$ protocols.}
The demand for ultra-low latency in decentralized networks has driven recent protocols to instantiate optimal two-round decision paths. Minimmit~\cite{Minimmit} relies on a pure counting model that conservatively requires $n \geq 5f+1$, treating any offline replica as a potential equivocator. Alpenglow~\cite{Alpenglow} refines this regime with a separate allowance for offline stake (its ``20+20'' model), but its safety holds only when leaders do not equivocate, so it does not withstand the unrestricted Byzantine behavior we consider. Kudzu~\cite{Kudzu} integrates an optimistic fast path: with $n = 3f+2p+1$ replicas, it commits in two rounds only when at most $p$ replicas actually misbehave, reverting to a slower three-round path otherwise. Hydrangea~\cite{Hydrangea} brings this optimistic-fast-path design to a hybrid $(f,c)$ analysis requiring $n = 3f+2c+k+1$: its two-round commit is optimistic in the same sense---the fast path remains live only while the number of actual faults is at most a threshold $p$, beyond which the protocol falls back to a slower, PBFT-style 3-round path. Pushed to its extreme configuration ($k=2f+c-3$, hence $n=5f+3c-2$), this optimistic threshold shrinks to $p=f+c-2$, strictly below the full fault budget $f+c$ the system is configured to tolerate. Hydrangea's smaller committee therefore does not contradict our lower bound: it dips below $5f+3c+1$ only by conditioning fast-path liveness on fewer actual faults than the configured budget.
Furthermore, when Hydrangea fails to commit on a leader, it requires at least five message delays to recover and commit that leader's transactions, whereas \sysname can complete this in only four. Additionally, unlike \sysname, Hydrangea lacks a mechanism to rapidly skip unresponsive leaders in the style of Mysticeti~\cite{mysticeti}, making its recovery potentially slower still.
The two-step lineage dates to FaB Paxos~\cite{FaBPaxos}, which established $n \ge 5f+1$ with a matching lower bound for certificate-forwarding recovery; a theoretical construction can shave this to $5f-1$~\cite{Kuznetsov2021Revisiting}, at a complexity cost that has kept practical systems at $5f+1$. Unlike these approaches, we prove that $n = 5f+3c+1$ is necessary for any protocol that commits in two message delays from a single vote quorum and remains live under the full budget of $f$ Byzantine and $c$ crash faults (\cref{thm:main}), and sufficient via \sysname---the three extra replicas over Hydrangea's extreme point buy an unconditional 2-message-delay commit that needs no fallback path, while separating the cost of crash faults from Byzantine faults. Our choice allows us to also provide a resilient path for clients to tolerate an additional $f_{abc}$ equivocators at the cost of two extra message delays, without degrading the core protocol's optimal liveness in executions where AbC replicas do not act, but results in lower fast-path liveness guarantees than Hydrangea.

\paragraph*{Client-side safety and Alive-but-Corrupt faults.}
FlexibleBFT~\cite{Malkhi2019FlexibleBFT} introduced the concept of \emph{alive-but-corrupt} ($f_{abc}$) faults, cleanly separating replica-quorum liveness assumptions from client-visible safety guarantees. While FlexibleBFT primarily uses this slack to separate synchronous and asynchronous network assumptions, we apply it directly to the 2-delay quorum framework. Unlike existing systems that force all clients to accept the same latency-security trade-off, our dual-path architecture explicitly empowers clients. By chaining \checkpointqcs, our Resilient Path allows clients to tolerate an additional $f_{abc}$ equivocators at the cost of two extra message delays; the core protocol's optimal liveness is unaffected as long as AbC replicas do not act, and if they do, liveness is only temporarily lost until synchronous recovery completes while resilient-path safety is preserved throughout. Orthogonally, a line of low-latency payment systems---FastPay~\cite{fastpay}, Zef~\cite{zef}, and Stingray~\cite{stingray}---forgoes consensus for single-owner transactions, trading general programmability for latency; \sysname instead retains full SMR..

\section{Conclusion}
\label{sec:conclusion}

We explored the fundamental limits of achieving optimal 2-message-delay consensus under a hybrid fault model. By cleanly separating Byzantine faults ($f$) from crash faults ($c$), we derived tight quorum intersections requiring $n \geq 5f+3c+1$ with an optimal Fast-Path threshold of $q = n-f-c$ ($4f+2c+1$ at the minimal committee size). This allows modern decentralized systems to relax the severe liveness requirements associated with pure $5f+1$ protocols without sacrificing latency.
Building upon these thresholds, we introduced \sysname, a dual-path consensus protocol that explicitly exposes a latency-safety trade-off to clients. Clients requiring ultra-low latency can finalize in two message delays via the Fast-Path (relying on \voteqcs), while clients prioritizing safety can wait four message delays for the Resilient Path (chaining \checkpointqcs and \finalityqcs). By enforcing a strict single-proposal rule for checkpoints, the Resilient Path guarantees safety against an extended set of $f_{abc}$ alive-but-corrupt replicas. If the core $f$ threshold is ever breached, the system safely halts and employs an Authenticated Byzantine Broadcast recovery mechanism to deterministically salvage the Resilient Path.
\ifpublish
    Finally, we demonstrated the practical applicability of our results by showing how they seamlessly map onto state-of-the-art DAG-based architectures. Future work includes extending these hybrid threshold derivations to fully asynchronous consensus environments and developing formal economic models to dynamically adjust $f$, $c$, and $f_{abc}$ budgets during live deployments.
\fi
\ifpublish
\section*{Acknowledgements}

This work is funded by Mysten Labs. We would like to thank Aniket Kate for discussions on early drafts.

\fi

\bibliographystyle{IEEEtran}
\bibliography{ref}

\appendices
\crefalias{section}{appendix}

\section{Detailed Algorithms for the DAG-Based Variant} \label{sec:algorithms}
This appendix complements \Cref{sec:dag-application} by formally defining the commit logic of \orcdag, the DAG-based variant of \sysname, through detailed algorithms.

\paragraph*{DAG-building layer.}
We assume the underlying DAG-building logic of Mysticeti~\cite{mysticeti}: replicas proceed in logical rounds; in each round every honest replica proposes one block referencing $\geq q$ distinct valid blocks from the previous round; blocks are disseminated to others; only blocks whose entire causal history has been validated are stored locally. The decision logic specified here operates on this local DAG and is independent of how blocks reach the replica.

\paragraph*{Entry point and idempotency.}
\Cref{alg:main} is the commit-logic entry point. Inline with related work~\cite{mysticeti,mahi-mahi}, it is idempotent and stateless from the DAG engine's perspective: the engine may invoke it whenever it likes, typically upon receiving and integrating a new block, passing the highest round currently in the local DAG ($r_{highest}$) and the round of the last block already committed ($r_{committed}$). The procedure returns the extension to the commit sequence (possibly empty) that the engine should append to its committed prefix.
The entry point is \Call{ExtendCommitSeq}{$r_{committed}, r_{highest}$}, which internally calls \Call{TryDecide}{} to evaluate each undecided leader slot using the rules in \Cref{alg:decider}, then linearises the causal sub-DAG of every newly committed leader (as introduced by DAG-Rider~\cite{dag-rider}). \Cref{alg:decider} specifies the per-slot decision process and with the supporting helper procedures (\Call{GetDecisionBlocks}{}, \Call{GetLeaderBlocks}{}, \Call{Link}{}).

\begin{algorithm}[t]
    \caption{Decision Rules}
    \label{alg:main}
    \algsize
    \begin{algorithmic}[1]
        \State \texttt{leadersPerRound} \Comment{A number between 1 and $q$}
        \State \texttt{waveLength} \Comment{Set to $2$ for \sysname}
        \Statex

        \Procedure{TryDecide}{$r_{committed}, r_{highest}$}
        \State $S \gets [ \; ]$ \Comment{Holds decisions}
        \For{$r \gets r_{highest}$ \textbf{down to} $r_{committed} + 1$}
        \For{$l \gets \texttt{leadersPerRound} - 1$ \textbf{down to} $0$}
        \State $i \gets r \; \bmod$ \texttt{waveLength}
        \State $D \gets$ \Call{Decider}{$i, l$}
        \State $w \gets D.$\Call{WaveNumber}{$r$}
        \State $s \gets D.$\Call{TryDirectDecide}{$w$}
        \If{$s = \bot$} $s \gets D.$\Call{TryIndirectDecide}{$w, S$}
        \EndIf
        \State $S \gets s \parallel S$
        \EndFor
        \EndFor
        \State \Return $S$
        \EndProcedure
        \Statex

        \Procedure{ExtendCommitSeq}{$r_{committed}, r_{highest}$}
        \State $S \gets$ \Call{TryDecide}{$r_{committed}, r_{highest}$}
        \State $S_{commit} \gets [ \; ]$ \Comment{Holds committed blocks}
        \For{$s \in S$}
        \If{$s = \bot$} \textbf{break}
        \EndIf
        \If{$s = \texttt{Commit}(b_{leader})$} $S_{commit} \gets S_{commit} \parallel b_{leader}$
        \EndIf
        \EndFor
        \State \Return \Call{LinearizeSubDags}{$S_{commit}$} \Comment{Same as DAG-Rider~\cite{dag-rider}}
        \EndProcedure

    \end{algorithmic}
\end{algorithm}

\begin{algorithm}[t!]
    \caption{Decider Instance and Helpers}
    \algsize
    \label{alg:decider}
    \begin{algorithmic}[1]

        \State \texttt{waveOffset} $= i$ \Comment{The first parameter of the Decider (i)}
        \State \texttt{leaderOffset} $= l$ \Comment{The second parameter of the Decider (l)}
        \State \texttt{waveLength} \Comment{Set to $2$ for \sysname}
        \State \texttt{replicas} \Comment{The set of replicas}
        \Statex

        \Procedure{WaveNumber}{$r$}
        \State \Return $(r - \texttt{waveOffset}) / \texttt{waveLength}$
        \EndProcedure
        \Statex

        \Procedure{ProposeRound}{$w$}
        \State \Return $(w * \texttt{waveLength}) + \texttt{waveOffset}$
        \EndProcedure
        \Statex

        \Procedure{DecisionRound}{$w$}
        \State \Return \Call{ProposeRound}{$w$}$ + (\texttt{waveLength} - 1)$
        \EndProcedure
        \Statex

        \Procedure{StronglyCertifiedLeader}{$w, b_{leader}$}
        \State $B_{decision} \gets$ \Call{GetDecisionBlocks}{$w$}
        \State \Return $|\{ b'.author : b' \in B_{decision} \land $ \Call{Support}{$b_{leader}, b'$}$ \}| \geq q$ \Comment{$q = n-f-c$; count authors, as replicas may equivocate}
        \EndProcedure
        \Statex

        \Procedure{SkippedLeader}{$w, b_{leader}$}
        \State $B_{decision} \gets$ \Call{GetDecisionBlocks}{$w$}
        \State \Return $|\{ b'.author : b' \in B_{decision} \land \neg$\Call{Support}{$b_{leader}, b'$}$ \}| \geq q$ \Comment{$q = n-f-c$}
        \EndProcedure
        \Statex

        \Procedure{TryDirectDecide}{$w$}
        \State $B_{leader} \gets$ \Call{GetLeaderBlocks}{$w$, \texttt{leaderOffset}}
        \For{$b_{leader} \in B_{leader}$}
        \If{\Call{SkippedLeader}{$w, b_{leader}$}} \Return \texttt{Skip}
        \EndIf
        \If{\Call{StronglyCertifiedLeader}{$w, b_{leader}$}} \Return \texttt{Commit}$(b_{leader})$
        \EndIf
        \EndFor
        \State \Return $\bot$
        \EndProcedure
        \Statex

        \Procedure{WeaklyCertifiedLeader}{$b_{anchor}, b_{leader}$}
        \State $w \gets$ \Call{WaveNumber}{$b_{leader}.round$}
        \State $B_{decision} \gets$ \Call{GetDecisionBlocks}{$w$}
        \State \Return $|\{ b.author : b \in B_{decision} \land $ \Call{Support}{$b_{leader}, b$} $\land$ \Call{Link}{$b, b_{anchor}$}$ \}| \geq k$ \Comment{$k = 2f+c+1$; votes are supports, anchor certification is reachability}
        \EndProcedure
        \Statex

        \Procedure{TryIndirectDecide}{$w, S$}
        \State $r_{decision} \gets $\Call{DecisionRound}{$w$}
        \State $s_{anchor} \gets$ first $s \in S$ s.t. $r_{decision} < s.round \land s \neq$ \texttt{Skip}
        \If{$s_{anchor} = \texttt{Commit}(b_{anchor})$}
        \State $B_{leader} \gets$ \Call{GetLeaderBlocks}{$w$, \texttt{leaderOffset}}
        \If{$\exists \; b_{leader} \in B_{leader}$ s.t. \Call{WeaklyCertifiedLeader}{$b_{anchor}, b_{leader}$}} \Return \texttt{Commit}$(b_{leader})$
        \Else \; \Return \texttt{Skip}
        \EndIf
        \EndIf
        \State \Return $\bot$
        \EndProcedure
        \Statex

        \Procedure{GetDecisionBlocks}{$w$}
        \State $r_{decision} \gets $\Call{DecisionRound}{$w$}
        \State \Return $DAG[r_{decision}]$
        \EndProcedure
        \Statex

        \Procedure{GetLeaderBlocks}{$w, rank$} \Comment{Replicas may equivocate}
        \State $r_{propose} \gets \Call{ProposeRound}{w}$
        \State $s \gets r_{propose}$
        \State $leader \gets \texttt{replicas}[(s + rank) \bmod |\texttt{replicas}|]$
        \State \Return $\{ b \in DAG[r_{propose}] : b.author = leader\}$
        \EndProcedure
        \Statex

        \Procedure{Link}{$b_{old}, b_{new}$}
        \State \Return $\exists$ a sequence of $m \in \mathbb{N}$ blocks $b_1, \ldots, b_m$ s.t.
        \Statex \hspace{2em} $b_1 = b_{old} \land b_m = b_{new} \land \forall j \in [2, m] : b_j \in \bigcup_{r \geq 1} DAG[r] \land b_{j-1} \in b_j.parents$
        \EndProcedure
        \Statex

        \Procedure{Support}{$b_{old}, b_{new}$}
        \State \Return $b_{old}$ is the first block by $b_{old}.author$ in round $b_{old}.round$ encountered
        \Statex \hspace{2em} in the deterministic depth-first traversal of $b_{new}$'s causal history \Comment{Mysticeti support~\cite{mysticeti}; selects at most one block per (author, round), even if the history contains equivocating copies}
        \EndProcedure

    \end{algorithmic}
\end{algorithm}

\section{Implementation}
\label{sec:implementation}

We implement a networked, multi-core \orcdag replica in Rust by forking the Mysticeti codebase~\cite{mysticeti-code,sui-lutris}. Our implementation leverages \texttt{tokio}~\cite{tokio} for asynchronous networking, utilizing raw TCP sockets for communication without relying on any RPC frameworks. For cryptographic operations, we rely on \texttt{ed25519-consensus}~\cite{ed25519-consensus,eddsa-bls-perf} for asymmetric cryptography and \texttt{blake2}~\cite{rustcrypto-hashes} for cryptographic hashing. To ensure data persistence and crash recovery, we employ a Write-Ahead Log (WAL). The WAL optimizes I/O operations through vectored writes~\cite{writev} and efficient memory-mapped file usage with the \texttt{minibytes}~\cite{minibytes} crate, minimizing data copying and serialization.

Both \orcdag and the Mysticeti baseline embed an identical checkpoint engine to support the Resilient Path (\cref{sec:protocol}). After executing a committed sub-DAG, where execution consists of deterministic parallel writes over the state, each replica derives a cryptographic commitment to the resulting state and forms a checkpoint attestation. The engine samples one checkpoint per committed sub-DAG, yielding ${\sim}25$ checkpoints per second (roughly one checkpoint per $40$\,ms of committed history). Following the practice of production blockchains~\cite{sui-code}, the engine introduces no custom checkpoint messages, thereby avoiding dedicated \textsc{ChkProp} and \textsc{ChkWitness} (\cref{alg:protocol}) network paths. Instead, it reuses consensus itself as the broadcast layer, submitting checkpoint attestations as small, specialized (\emph{system}) transactions; a checkpoint is certified once a quorum of matching attestations is ordered.

The key correctness point of this design is that ordering the attestation quorum implicitly yields the witness guarantee of the \finalityqc: once the quorum of matching attestations (forming a \checkpointqc) is ordered, every correct replica observes and stores the certified checkpoint as part of the total order, ensuring that the checkpoint is recorded without any dedicated witness round. For \orcdag, this mechanism realizes Resilient-Path finality; for the Mysticeti baseline ($n=3f+1$), the same engine provides durability only, since the fork bound at $3f+1$ leaves no alive-but-corrupt (AbC) equivocation slack. The cost of this reuse is latency: checkpoint certification completes one full ordering cycle after the transaction itself is ordered (introducing one extra message delay beyond the protocol-level bound). While dedicated checkpoint messages would save that delay, they would introduce new message types with distinct dissemination, synchronization, and recovery paths, which adds significant engineering complexity; we deliberately keep our engine production-faithful.

In addition to regular unit tests, we inherit and use two supplementary testing utilities from the Mysticeti codebase. First, a simulation layer replicates the functionality of the \texttt{tokio} runtime and TCP networking; the simulated network reproduces realistic WAN latencies, while the \texttt{tokio} runtime simulator employs a discrete-event simulation approach to model the passage of time. Second, a command-line utility (called the \emph{orchestrator})~\cite{narwhal,bullshark} deploys real-world clusters of \orcdag replicas on machines distributed across the globe.

We open-source our \orcdag implementation, along with its simulator and orchestration tools, to ensure reproducibility of our results\footnote{\consensuscode}\footnote{\validatorcode}.

\section{Testbed Details} \label{sec:experimental-setup}

This appendix complements \Cref{sec:evaluation} by detailing the testbed and experimental setup used to evaluate \orcdag.

We deploy all protocols on AWS, using \texttt{m5d.8xlarge} instances across $6$ different AWS regions: Northern Virginia (us-east-1), Ohio (us-east-2), Frankfurt (eu-central-1), London (eu-west-2), Paris (eu-west-3), and Tokyo (ap-northeast-1). Replicas are distributed across those regions as equally as possible. Each machine provides $10$\,Gbps of bandwidth, $32$ virtual CPUs (16 physical cores) on a $3.1$\,GHz Intel Xeon Skylake 8175M, $128$\,GB memory, and runs Linux Ubuntu server $24.04$.

We instantiate several geo-distributed benchmark clients within each replica submitting transactions in an open-loop model at a fixed rate. We experimentally increase the load of transactions sent to the systems, and record the throughput and latency of commits. As a result, all plots in \Cref{sec:evaluation} illustrate the steady-state latency of all systems under low load, as well as the maximal throughput they can provide after which latency grows quickly. Transactions in the benchmarks are arbitrary and contain $512$ bytes. We configure \orcdag and Mysticeti with $2$ leaders per round, and all protocols use a leader timeout of $1$ second.

All large-committee failure-free results are obtained from a single measurement campaign, using uniform committees of $n{=}51$ for every \orcdag configuration and $n{=}50$ for Mysticeti. For per-run statistics, we restrict each run to its throughput plateau (scrapes sustaining at least $90\%$ of the peak rate) and report the steady-state floor (the minimum p50 and p90 over this plateau) to ensure robustness to warm-up and cool down; throughput is reported as the median sustained rate. Each run lasts ${\sim}300$\,s, with metrics sampled every $15$\,seconds.

\section{Crash-Only Deployment} \label{sec:crash-only}

When configured with $f=0$ to tolerate only crashes and no Byzantine faults, \sysname reduces to an optimal crash-fault-tolerant (CFT) protocol~\cite{Paxos}. \Cref{fig:crash-only} reports the performance of \orcdag configured with $f=0$ and $c=1$ in a minimal $n=3$ deployment, where the three replicas run in three distinct, nearby European regions---Frankfurt (eu-central-1), Ireland (eu-west-1), and London (eu-west-2)---to model a regional CFT deployment. The plot shows two loads, $1$k and $10$k\,tx/s. Each box spans the mean $\pm$ $1$ standard deviation, and the whiskers denote $\pm$ $2$ standard deviations. In this lightly-loaded regime the commit latency is load-independent and essentially flat: \orcdag commits with a mean latency of about $22$\,ms at $1$k\,tx/s and about $23$\,ms at $10$k\,tx/s, with a standard deviation of about $7$\,ms. This is roughly two orders of magnitude below the hundreds of milliseconds observed in the geo-distributed WAN runs of the main evaluation (\Cref{sec:evaluation}), highlighting the best-case latency of a regional crash-tolerant deployment.

\begin{figure}[t]
    \centering
    \includegraphics[width=\linewidth]{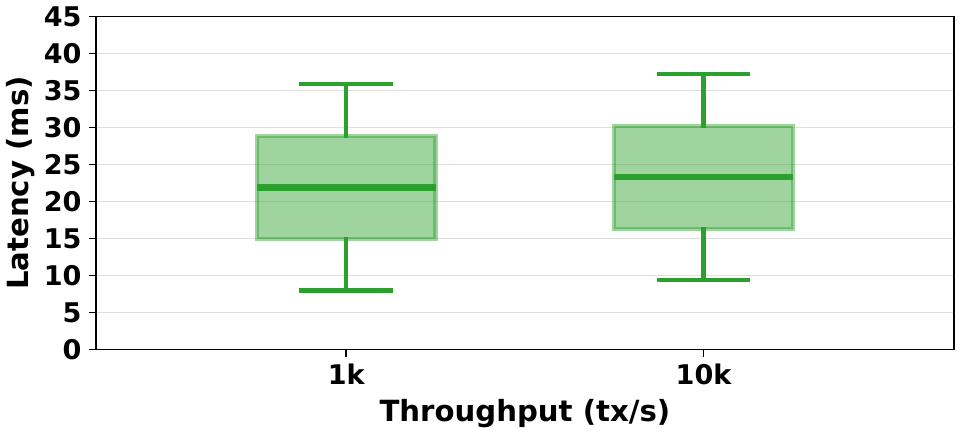}
    \caption{\footnotesize EU (multi-region) crash-only \orcdag{} run; mean $\pm$ stdev.}
    \label{fig:crash-only}
\end{figure}

\end{document}